\documentclass[12pt]{interact}
\usepackage{amsmath,latexsym,amsthm}
\usepackage{graphics,graphicx,subfigure,amssymb,adjustbox,a4wide}

\def\xsum{\mathop{\sum\nolimits'}}
\usepackage{color}

\newtheorem{thm}{Theorem}[section]
\newtheorem{defi}[thm]{Definition}

\newtheorem{prop}[thm]{Proposition}

\newtheorem{lemma}[thm]{Lemma}

\theoremstyle{definition}
\newtheorem{remark}[thm]{Remark}
\newtheorem{remarks}[thm]{Remarks}

\newtheorem{examples}[thm]{Examples}

\newcommand{\R}{\mathbb R}

\newcommand{\Z}{\mathbb Z}
\newcommand{\N}{\mathbb N}

\begin{document}

\title{Structural transitions in interacting lattice systems}

\author{
\name{Laurent B\'etermin}
\affil{Institute Camille Jordan, Universit\'e Claude Bernard Lyon 1,
Villeurbanne, France} \vskip0.3truecm
\name{Ladislav \v{S}amaj \and Igor Trav\v{e}nec} 
\affil{Institute of Physics, Slovak Academy of Sciences, 
D\'ubravsk\'a cesta 9, 84511 Bratislava, Slovakia}}

\date{Received:   / Accepted: }

\maketitle

\begin{abstract}
We consider two-dimensional systems of point particles located on rectangular
lattices and interacting via pairwise potentials.
The goal of this paper is to investigate the phase transitions (and their
nature) at fixed density for the minimal energy of such systems.
The 2D rectangle lattices we consider have an elementary cell of sides $a$
and $b$, the aspect ratio is defined as $\Delta=b/a$ and
the inverse particle density $A = a b$; therefore, the ``symmetric'' state
with $\Delta=1$ corresponds to the square lattice and the ``non-symmetric''
state to the rectangular lattice with $\Delta\ne 1$.
For certain types of the interaction potential, by changing continuously
the particle density, such lattice systems undertake at a specific
value of the (inverse) particle density $A^*$ a structural transition from
the symmetric to the non-symmetric state.
The structural transition can be either of first order ($\Delta$
unstick from its symmetric value $\Delta=1$ discontinuously) or of
second order ($\Delta$ unstick from $\Delta=1$ continuously); the
first and second-order phase transitions are separated by the so-called
tricritical point.
We develop a general theory on how to determine the exact values of
the transition densities and the location of the tricritical point. 
The general theory is applied to the double Yukawa and
Yukawa-Coulomb potentials.
\end{abstract}
  
\begin{keywords}
lattice systems, double Yukawa potential, phase transition, tricritical point
\end{keywords}

\begin{amscode}
74G65, 74N05, 82B26
\end{amscode}

\renewcommand{\theequation}{1.\arabic{equation}}
\setcounter{equation}{0}

\section{Introduction} \label{Sec1}

\renewcommand{\theequation}{2.\arabic{equation}}
\setcounter{equation}{0}

Lattice systems of interacting particles are known as good models
to understand physical
\cite{Bet16,BetFriedStefEAM21,LuoRenWei20,LuoWei22,SandSerf12} or biological
\cite{Betgrid21,Mogilner03} phenomena in the simplest periodic setting
(see also \cite{BlancLewin15} and references therein).
In particular, the ground state approach -- where a lattice energy has
to be minimized in order to find the most stable state of the system --
is of high interest in applied mathematics.

\medskip

Considering a radially symmetric interaction potential $f$ with parameters, a set of lattices $\mathcal{L}\subset \R^d$, and defining the energy of this lattice system as
$$
\forall L\in \mathcal{L},\quad  E[L]=\frac{1}{2}\sum_{p\in L\backslash\{0\}} f(|p|),
$$
where $|\cdot|$ is the Euclidean norm on $\R^2$, one can ask the following important questions:
\begin{itemize}
\item[•] what is the minimizer of $E$ in the class $\mathcal{L}$?
\item[•] does this minimizer changes when parameters of $f$ vary?
\item[•] does the minimizer changes when the lattice's density varies? 
\end{itemize}

These questions have recently received a certain interest in the mathematical
community \cite{Bet16,BetPetrache19,BetFriedStefEAM21,LuoWeiLJ22,LuoWei23}
as well as their more general associated problem known as crystallization
(see \cite{BetLucPet21} and references therein), best packings
\cite{Viazetal17,Viaz17} and universal optimality \cite{CohnKumar07,Viazetal22}.
The questions of phase transitions, i.e. studying the energy minimizers
as density or parameters vary, have been shown as both interesting
\cite{Bet18,SamajTravenec19} and difficult
\cite{BetSamTrav23,LuoWeiLJ22} to solve.

\medskip

Very few results exist in that direction.
The Lennard-Jones system -- i.e. for a difference of inverse power laws --
is now entirely understood in dimension 2 \cite{Bet23,LuoWeiLJ22} in terms
of minimizers at fixed density whereas many problems in the field are still
open, with a lack of general theory.
Furthermore, only few -- but very important -- results are known
in higher dimensions, especially in dimensions 8 and 24 where best packing and
universal optimality are proven \cite{BetDefects21,Viazetal22}.
The rest of our knowledge is restricted to numerical investigations
(see e.g. \cite{BetSamTrav23}).

\medskip

The goal of this paper is to derive technical, theoretical and numerical
results for phase transition problems in lattice systems. More precisely, we
are tackling here both first and second order phase transitions for smooth
potentials $f$ among two-dimensional rectangular lattices, splitting ground
states into square and non-square structures. Following Landau's free energy
approach in statistical physics \cite{Bausch72,Landau37,Landau13,Toledano87},
we expand our energy $E$ in terms of a (small) lattice parameter and
we study the corresponding transition points (second-order phase transition),
i.e. the value of the inverse density where transition occurs.
Furthermore, we focus on important transition points called tricritical points
where (continuous) second-order transition becomes of (discontinuous)
first-order.

\medskip

Our findings are both theoretical and numerical. After defining the classes of interaction potentials and lattices we are studying in this work, we obtain the following results:
\begin{enumerate}
\item \textbf{Theoretically}, we show how a Taylor expansion of the energy $E$ can be performed in terms of the lattice parameter $\varepsilon$ ($\varepsilon=0$ corresponding to the square lattice). Results are derived ensuring the existence of transition/tricritical points and showing the universal behavior of the energy minimizer in the neighborhood of these points.\\
\item \textbf{Numerically}, we are investigating both double Yukawa and Yukawa-Coulomb potentials. Since they are highly inhomogeneous, a numerical study is performed in order to plot phase diagrams, second-order phase transition curves and tricritical points. 
\end{enumerate}
In both cases, asymptotics results in the neighborhood of
transition/tricritical points as well as singular values of
potential's parameters are derived.

\medskip

The method can be generalized easily to other types of lattices,
like the 2D (equilateral) triangle lattice which is the centered rectangle
lattice with the aspect ratio $\sqrt{3}$ against the general centered
rectangle lattice.
The generalization of the method to 3D lattice structure is also
straightforward.
The application of the theory to other types of interaction potentials,
especially to Lennard-Jones potential, is of our future interest.

\medskip

\textbf{Plan of the paper.}
In Section \ref{Sec2}, we give the precise definitions of potentials,
lattices and energies we are considering.
General results on phase transitions are proved in Section \ref{Sec3}
whereas applications to double Yukawa and Yukawa-Coulomb potentials are
presented in Sections \ref{Sec4} and \ref{Sec5}, respectively,
with both numerical and theoretical aspects.

\section{Preliminary formalism} \label{Sec2}
In this section, we briefly present the type of potential,
lattices and energy we are considering in this paper.

\medskip

Let us start with potentials. Our goal is to cover the main interaction potentials presented for instance in \cite{Kaplan} (see also \cite{Bet16}).
\begin{defi}[\textbf{Admissible potential}]
We say that $f\in \mathcal{F}$ if $f:(0,\infty)\to \R$, $|f(r^2)|=O(r^{-2-\varepsilon})$ for some $\varepsilon>0$ as $r\to \infty$, $f\in C^{\infty}(\R_+^*)$ and there exists a Radon measure $\mu_f$ on $(0,+\infty)$ such that
\begin{equation}\label{fgen}
f(r) = \int_0^\infty e^{-r^2 t}\ {\rm d}\mu_f(t).
\end{equation}
Furthermore, we say that $f\in \mathcal{F}_+$ if $\mu_f$ is non-negative.
\end{defi}
\begin{remarks}[\textbf{Completely monotone potentials}]
Let us write $f(r)=F(r^2)$ where $F:(0,+\infty)\to \R$, then:
\begin{enumerate}
\item the measure $\mu_f$ is actually the inverse Laplace transform of $F$;
\item By Bernstein-Hausdorff-Widder theorem \cite{Bernstein29}, we know that $f\in \mathcal{F}_+$ if and only if $F$ is completely monotone, i.e. the derivatives of $F$ alternate their sign: $\forall k\in \N$, $\forall r>0$, $(-1)^{k}F^{(k)}(r)\geq 0$.
\end{enumerate}
\end{remarks}

\begin{examples}[\textbf{Riesz and Yukawa potentials}]
Let us mention two important interaction potential we will consider
in this work:
\begin{enumerate}
\item the Riesz potential with parameter $s>0$ is given by
\begin{equation}\label{riesz}
f(r) = \frac{1}{r^s} , \qquad
d\mu_f(t) =  \frac{1}{\Gamma(s/2)} t^{s/2-1}dt
\end{equation}
with $\Gamma$ being the Euler Gamma function.
We notice that $f\in \mathcal{F}_+$ if and only if $s>2$.
Nevertheless, we will also consider lower exponents by renormalizing
our lattice energy (see Remark \ref{rmk-renorm}) even though
$f\not\in \mathcal{F}$ in that case, because of its non-integrability
at infinity.
\item the Yukawa potential, belonging to $\mathcal{F}_+$ reads,
for $\kappa>0$, as
\begin{equation}\label{yukawa}
f(r) = \frac{e^{-\kappa r}}{r} , \qquad
d\mu_f(t) =  \frac{1}{\sqrt{\pi t}}
\exp\left(-\frac{\kappa^2}{4t} \right)dt .
\end{equation}
\end{enumerate}
\end{examples}

The set of lattice structures we are considering is defined as follows.

\begin{defi}[\textbf{Family of rectangular lattices}]
Let $\Delta \in (0,1]$ and $A>0$. The rectangular lattice of area $A$ and side-lengths $\sqrt{A\Delta}$ and $\sqrt{\frac{A}{\Delta}}$ is defined as
$$
L_{\Delta, A}:=\sqrt{A}\left[\Z\left(\frac{1}{\sqrt{\Delta}},0 \right)\oplus \Z \left(0,\sqrt{ \Delta} \right)\right],
$$
and its associated quadratic form is defined by
$$
\forall (j,k)\in \Z^2,\quad Q_{\Delta, A}(j,k):=\left|j\sqrt{A}\left(\frac{1}{\sqrt{\Delta}},0 \right)+k\sqrt{A}\left(0,\sqrt{ \Delta} \right) \right|^2= A \left( \frac{j^2}{\Delta} + k^2\Delta \right),
$$
where $|\cdot|$ is the euclidean norm on $\R^2$.
\end{defi}
\begin{remark}
Notice that this family of lattices is exactly, up to isometry, the set of all rectangular lattices with fixed density. Furthermore, the particle system is invariant under rotation by the right angle $\pi/2$
which is equivalent to the interchange
$\Delta$ and $1/\Delta$.
The special case $\Delta=1$ corresponds to the ``symmetric'' state of
the square lattice, the case $\Delta\ne 1$ corresponds to
the ``non-symmetric'' state of the rectangle lattice.
\end{remark}

The lattice energy we are studying in this paper is the energy per point of $L_{A,\Delta}$ interacting via potential $f$.

\begin{defi}[\textbf{Interaction energy}]
Let $f\in \mathcal{F}$, $A>0$, $\Delta\in (0,1]$, then the $f$-energy of $L_{A,\Delta}$ is defined by
\begin{equation} \label{ereca}
E(A,\Delta):=\frac{1}{2} \xsum_{j,k=-\infty}^{\infty}
f\left(\sqrt{Q_{A,\Delta}(j,k)} \right)=\frac{1}{2} \xsum_{j,k=-\infty}^{\infty}
f \left( \sqrt{A \left( \frac{j^2}{\Delta} + k^2\Delta \right)} \right).
\end{equation}
\end{defi}
\begin{remark}
The prefactor $1/2$ appears because each energy term is shared by two particles and the prime at the sum means that the self-energy term $j=k=0$ is omitted. Note that the dependence of the energy on the parameters of the potential $f$ is not written explicitly, for simplicity reasons.
\end{remark}

\begin{remark}\label{regularity}
Since $f$ is admissible, it follows immediatly that, by composition and absolute summability of the energy \eqref{ereca}, $E\in C^{\infty}(\R_+^*\times \R_+^*)$.
\end{remark}

%To calculate the interaction energy per particle, we choose the particle at
%the origin $(0,0)$ as the reference one, the distance from $(0,0)$ to other
%lattice sites $(j,k)$ with $(j,k)\ne (0,0)$ is given by
%\begin{equation} \label{rjk}
%r^2_{j,k} = a^2j^2 + b^2k^2 = A \left( \frac{j^2}{\Delta} + k^2\Delta \right) .
%\end{equation}  
%The energy per particle equals to the sum over sites of
%the infinite rectangular lattice,
%\begin{equation}
%E(A,\Delta)  =  \frac{1}{2} \xsum_{j,k=-\infty}^{\infty} f(r_{j,k}) ,
%\end{equation}
%where 
As already done in other papers \cite{Bet16,Faulhuber17} for rectangular or general lattices, this energy can be written in terms of Jacobi theta functions thanks to the Laplace transform expression of $f$.

\begin{lemma}[\textbf{Integral representation of the energy}, see e.g.
\cite{Bet16,Faulhuber17}]
Let $f\in \mathcal{F}$, $A>0$, $\Delta\in (0,1]$, then we have
\begin{equation} \label{final}
E(A,\Delta) = \frac{1}{2} \int_0^\infty
\left[ \theta_3\left({\rm e}^{-t/\Delta}\right)
\theta_3\left({\rm e}^{-t\Delta}\right) - 1 \right] {\rm d}\mu_f\left(\frac{t}{A} \right) , 
\end{equation}
where
\begin{equation} \label{theta}
\theta_3(q)=\sum_{j=-\infty}^{\infty} q^{j^2}
\end{equation}
denotes Jacobi elliptic function with zero argument.
\end{lemma}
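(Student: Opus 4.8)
The plan is to substitute the Laplace-transform representation \eqref{fgen} of the potential into the lattice sum \eqref{ereca}, interchange the order of summation and integration, and then recognize that the resulting Gaussian sum factorizes into a product of one-dimensional theta series. Concretely, for each fixed pair $(j,k)\neq(0,0)$ I would set $r^2 = Q_{A,\Delta}(j,k) = A\left(\frac{j^2}{\Delta}+k^2\Delta\right)$ in \eqref{fgen} to obtain
\[
f\left(\sqrt{Q_{A,\Delta}(j,k)}\right) = \int_0^\infty e^{-A\left(\frac{j^2}{\Delta}+k^2\Delta\right)t}\,{\rm d}\mu_f(t),
\]
so that, summing over the lattice,
\[
E(A,\Delta) = \frac12 \xsum_{j,k} \int_0^\infty e^{-A\left(\frac{j^2}{\Delta}+k^2\Delta\right)t}\,{\rm d}\mu_f(t).
\]

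The next and essentially only analytic step is to justify pulling the sum inside the integral. When $f\in\mathcal{F}_+$ the integrand is nonnegative and Tonelli's theorem applies immediately. For a general signed measure $\mu_f$ I would invoke Fubini's theorem on the product of the counting measure on $\Z^2\setminus\{0\}$ with $\mu_f$, for which it suffices to verify
\[
\xsum_{j,k}\int_0^\infty e^{-A\left(\frac{j^2}{\Delta}+k^2\Delta\right)t}\,{\rm d}|\mu_f|(t) < \infty,
\]
where $|\mu_f|$ denotes the total variation of $\mu_f$. This finiteness is precisely the absolute convergence of the lattice energy, which is guaranteed by the decay hypothesis built into the definition of $\mathcal{F}$ (together with the standard count of lattice points in annuli); this is the main obstacle, since in the signed case one cannot appeal to positivity and must instead reduce absolute integrability back to the admissibility of $f$.

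Once the interchange is secured, the argument is purely algebraic. I would use the separability of the exponent to factor
\[
\sum_{j,k}e^{-A\left(\frac{j^2}{\Delta}+k^2\Delta\right)t} = \Big(\sum_j \big(e^{-At/\Delta}\big)^{j^2}\Big)\Big(\sum_k \big(e^{-A\Delta t}\big)^{k^2}\Big) = \theta_3\big(e^{-At/\Delta}\big)\,\theta_3\big(e^{-A\Delta t}\big),
\]
by the definition \eqref{theta} of $\theta_3$. The prime on the lattice sum removes the origin term $(j,k)=(0,0)$, whose contribution to the product above is exactly $1$, which accounts for the subtracted constant in the bracket $\theta_3(e^{-At/\Delta})\theta_3(e^{-A\Delta t})-1$. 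Finally, the substitution $t\mapsto t/A$, recorded in the notation ${\rm d}\mu_f(t/A)$, rescales the exponents to $e^{-t/\Delta}$ and $e^{-t\Delta}$ and produces exactly \eqref{final}, completing the plan.
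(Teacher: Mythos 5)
Your overall route is the standard one and, as far as the paper is concerned, the intended one: the paper states this lemma without proof, deferring to \cite{Bet16,Faulhuber17}, and the argument there is exactly your plan (substitute \eqref{fgen} into \eqref{ereca}, interchange sum and integral, factor the Gaussian sum into $\theta_3\bigl(e^{-At/\Delta}\bigr)\theta_3\bigl(e^{-A\Delta t}\bigr)$, recover the $-1$ from the omitted origin term, and rescale $t\mapsto t/A$, which is what the notation ${\rm d}\mu_f(t/A)$ means). All of your algebraic steps are correct, and for $f\in\mathcal{F}_+$ the appeal to Tonelli settles the interchange completely.

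The gap is in the signed case, at precisely the step you yourself flag as the main obstacle. The two quantities
\[
\xsum_{j,k}\int_0^\infty e^{-Q_{A,\Delta}(j,k)\,t}\,{\rm d}|\mu_f|(t)
\qquad\text{versus}\qquad
\xsum_{j,k}\left|\int_0^\infty e^{-Q_{A,\Delta}(j,k)\,t}\,{\rm d}\mu_f(t)\right|
\]
are not the same: the second is the absolute convergence of the lattice energy, the first is your Fubini hypothesis, and the first can be infinite while the second is finite, because the decay hypothesis on $f$ in the definition of $\mathcal{F}$ may hold only thanks to cancellations in $\mu_f$, which passing to $|\mu_f|$ destroys. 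So the claim that your condition ``is precisely the absolute convergence of the lattice energy'' is false in general. Concretely, take ${\rm d}\mu_f(t)=\sin(1/t)\,\mathbf{1}_{(0,1]}(t)\,{\rm d}t$: an integration by parts gives $f(r)=O(r^{-4})$ as $r\to\infty$, so $f\in\mathcal{F}$, yet since $\theta_3\bigl(e^{-t/\Delta}\bigr)\theta_3\bigl(e^{-t\Delta}\bigr)-1\sim\pi/t$ as $t\to0^+$ and $\int_0^1 t^{-1}\,{\rm d}|\mu_f|(t)=\int_1^\infty |\sin u|\,u^{-1}\,{\rm d}u=\infty$, your Fubini hypothesis fails, and the right-hand side of \eqref{final} is then at best conditionally convergent (a looseness inherited from the paper's definition of $\mathcal{F}$). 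To close the argument you should either add the hypothesis $\int_0^1 t^{-1}\,{\rm d}|\mu_f|(t)<\infty$ --- harmless, and satisfied by every potential used in the paper, e.g. \eqref{muft} --- which together with $\int_0^\infty e^{-ct}\,{\rm d}|\mu_f|(t)<\infty$ (automatic from the absolute convergence of \eqref{fgen} at $r^2=c$) dominates $\theta_3\theta_3-1$ on all of $(0,\infty)$; or prove the interchange by a limiting argument: apply Fubini on $[\delta,\infty)$, where $\theta_3\bigl(e^{-t/\Delta}\bigr)\theta_3\bigl(e^{-t\Delta}\bigr)-1\le C_\delta e^{-ct}$ makes it legitimate, and then show that $\xsum_{j,k}\int_0^\delta e^{-Q_{A,\Delta}(j,k)\,t}\,{\rm d}\mu_f(t)\to0$ as $\delta\to0^+$, exploiting the oscillation of $\mu_f$ itself rather than bounding by $|\mu_f|$.
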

\begin{remark}
We use the Gradshteyn-Ryzhik \cite{Gradshteyn} notation for the third Jacobi theta function. Furthermore the subtraction of $-1$ in the square bracket is due to the absence of
the self-energy term $(j,k)=(0,0)$ in the sum (\ref{ereca}). If $\mu_f$ is absolutely continuous with respect to the Lebesgue measure, i.e. $d\mu_f(t)=\rho_f(t)dt$, then we have
\begin{equation}
{\rm d}\mu_f\left(\frac{t}{A} \right) = \rho_f\left(\frac{t}{A} \right) \frac{{\rm d}t}{A} . 
\end{equation}  
Moreover, note that the invariance of $E(A,\Delta)$ with respect to the transform
$\Delta\to 1/\Delta$ is obviously ensured by the formula (\ref{final}).
\end{remark}

\begin{remark}\label{rmk-renorm}
If the potential $f\not\in \mathcal{F}$ decays to zero at large distances too slowly -- like for instance for the Riesz potential with $s<2$ --
the integral in (\ref{final}) may diverge which requires a regularization. Typical examples are available in the literature, in particular in \cite[Section IV.2]{LewinJelliumReview}, where the renormalization for Riesz potential and logarithmic energy are presented and where it is showed that adding a uniform neutralizing background to the system yields to the analytic continuation of the Epstein zeta function. In the Coulomb case with $s=1$, it corresponds to the subtraction of the singular
term $-\pi/t$ in the square bracket in the integral (\ref{final}),
as it is also explained in \cite{Travenec22}. Therefore, the energy for 
$$
f(r)=\frac{1}{r}=\int_0^\infty e^{-r^2 t} \frac{dt}{\sqrt{\pi}\sqrt{t}},
$$ 
has to be redefined, using a straightforward change of variable, as
$$
E(A,\Delta)=\frac{1}{2\sqrt{A\pi}}\int_0^\infty\left[\theta_3(e^{-t\Delta})\theta_3(e^{-t/\Delta})-1-\frac{\pi}{t} \right]\frac{dt}{\sqrt{t}}.
$$
\end{remark}
\renewcommand{\theequation}{3.\arabic{equation}}
\setcounter{equation}{0}

\section{General theory of structural transitions} \label{Sec3}
To account better for the symmetry $\Delta\to 1/\Delta$ of the energy 
(\ref{final}), we introduce the parameter $\varepsilon$ via
\begin{equation} \label{Delta}
\Delta = e^\varepsilon .
\end{equation}
Thus the $\Delta\to 1/\Delta$ symmetry of the energy (\ref{final}) is
converted to the $\varepsilon\to -\varepsilon$ one of the energy
\begin{equation} \label{finall}
E(A,e^\varepsilon) = \frac{1}{2} \int_0^\infty 
\left[ \theta_3\left({\rm e}^{-t\exp(-\varepsilon)}\right)
\theta_3\left({\rm e}^{-t\exp(\varepsilon)}\right) - 1 \right]{\rm d}\mu_f\left(\frac{t}{A} \right) . 
\end{equation}
At the same time, the symmetric (square lattice) value of the parameter
$\Delta=1$ is consistent with $\varepsilon=0$. 

\medskip

Let us recall the universal optimality result (among lattices) concerning the square lattice, due to Montgomery \cite{Montgomery88}.

\begin{prop}[\textbf{Universal optimality among rectangular lattices}]
If $f\in \mathcal{F}_+$, then $\varepsilon=0$ is the unique minimizer of $\varepsilon\mapsto E(A,e^\varepsilon)$ for all $A>0$.
\end{prop}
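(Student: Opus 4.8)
The plan is to combine the integral representation (\ref{finall}) with the crucial sign information carried by the hypothesis. Since $f\in\mathcal{F}_+$ means $\mu_f\geq 0$, and since the $-1$ inside the bracket does not depend on $\varepsilon$, it suffices to prove that for \emph{every} fixed $t>0$ the integrand
\[
g_t(\varepsilon):=\theta_3\!\left(e^{-t\exp(-\varepsilon)}\right)\theta_3\!\left(e^{-t\exp(\varepsilon)}\right)
\]
is minimized, uniquely, at $\varepsilon=0$. Indeed, integrating a pointwise inequality $g_t(\varepsilon)\geq g_t(0)$ against the non-negative measure $\mathrm{d}\mu_f(t/A)$ immediately gives $E(A,e^\varepsilon)\geq E(A,1)$; and because $\mu_f$ is a non-zero non-negative measure, a pointwise \emph{strict} inequality for every $t>0$ forces strict inequality for $E$, hence uniqueness. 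This is precisely the step where $f\in\mathcal{F}_+$ is used: for signed $\mu_f$ the argument collapses, consistent with the possibility that non-completely-monotone potentials prefer $\Delta\neq 1$.

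Next I would linearize the constraint by passing to logarithms. Writing $\lambda(s):=\log\theta_3(e^{-e^{s}})$ and noting that the arguments $t\exp(\mp\varepsilon)$ have logarithms $\log t\mp\varepsilon$, one gets
\[
\log g_t(\varepsilon)=\lambda(\log t-\varepsilon)+\lambda(\log t+\varepsilon).
\]
This is the sum of the values of $\lambda$ at two points placed symmetrically about $\log t$, so strict convexity of $\lambda$ on $\R$ gives, by Jensen's inequality, $\log g_t(\varepsilon)\geq 2\lambda(\log t)=\log g_t(0)$ with equality if and only if $\varepsilon=0$. Thus the entire proposition reduces to a single statement that is independent of both $t$ and $A$: the function $\lambda(s)=\log\theta_3(e^{-e^{s}})$ is strictly convex.

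The convexity of $\lambda$ is the heart of the matter, and it is where I expect all the difficulty to lie; it is exactly the rectangular case of Montgomery's theorem \cite{Montgomery88}. A direct attack computes $\lambda''$ through the probabilistic reading of the theta series: letting $N$ be the $\Z$-valued variable with law $p_n\propto e^{-n^2 e^{s}}$, one finds
\[
\lambda''(s)=e^{2s}\,\mathrm{Var}(N^2)-e^{s}\,\mathbb{E}[N^2],
\]
so everything comes down to the inequality $e^{s}\,\mathrm{Var}(N^2)\geq \mathbb{E}[N^2]$ for all $s$. The hard part will be that this is genuinely delicate rather than crude: numerically $\lambda''$ stays positive but is extremely small across an intermediate range of $s$, so no one-line bound can succeed. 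The route I would take is to control the two tails from the asymptotics $\theta_3(e^{-x})\to 1$ as $x\to\infty$ and the Jacobi transformation $\theta_3(e^{-x})=\sqrt{\pi/x}\,\theta_3(e^{-\pi^2/x})$ as $x\to 0$ (which also exhibits $\lambda(s)+\tfrac{s}{4}$ as symmetric about the self-dual point $s=\log\pi$), and then to handle the middle range by the modular analysis underlying Montgomery's argument. Granting this strict convexity, the uniqueness claim in the proposition is automatic from the equality case of Jensen's inequality established above.
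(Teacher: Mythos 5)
Your argument is correct, and its skeleton is the same as the paper's: reduce to the pointwise claim that, for each fixed $t>0$, the product $g_t(\varepsilon)=\theta_3\left(e^{-t\exp(-\varepsilon)}\right)\theta_3\left(e^{-t\exp(\varepsilon)}\right)$ is uniquely minimized at $\varepsilon=0$, then integrate the strict pointwise inequality against the non-negative, non-zero measure ${\rm d}\mu_f(t/A)$. Where you genuinely diverge is in the treatment of the pointwise claim: the paper simply quotes Montgomery \cite{Montgomery88} for the statement about the product and stops, whereas you symmetrize in $s=\log t$ and reduce it to strict convexity of the single-theta function $\lambda(s)=\log\theta_3\left(e^{-e^{s}}\right)$, which is indeed equivalent (strict midpoint convexity plus smoothness). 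This repackaging buys you more than you seem to realize: your criterion $\lambda''(s)>0$, i.e. $e^{s}\,\mathrm{Var}(N^2)>\mathbb{E}[N^2]$ with $t=e^{s}$, becomes, after multiplying through by $t\,\theta_3^2$, exactly the inequality
\[
t\,\theta_3\theta_3^{(1)}+t^2\theta_3\theta_3^{(2)}-t^2\bigl(\theta_3^{(1)}\bigr)^2>0,\qquad t>0,
\]
of Faulhuber and Steinerberger \cite{Faulhuber17}, which this paper itself invokes in the proof of Proposition \ref{prop:transition}. So rather than re-running ``the modular analysis underlying Montgomery's argument'' --- which you leave as a sketch, and which (tail asymptotics plus an unresolved middle range) would not constitute a proof on its own --- you should close the convexity step by citing \cite{Faulhuber17}. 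With that citation your proof is complete and sits at the same level of rigor as the paper's (each rests on one quoted theta inequality), and it is arguably more economical within the paper's own toolkit: the single convexity statement then delivers both this proposition and point (1) of Proposition \ref{prop:transition}, since $E_2(A)$ is precisely $\tfrac{1}{2}\int_0^\infty \lambda''(\log t)\,\theta_3^2\,{\rm d}\mu_f(t/A)$ up to the positive factor $\theta_3^{2}$ inside the integrand.
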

\begin{proof}
Since $\mu_f$ is nonnegative and $\Delta=1$ is the unique minimizer of the positive function $\Delta\mapsto \theta_3(e^{-t\Delta})\theta_3(e^{-t\Delta^{-1}})-1$ as shown in \cite{Montgomery88}, it follows that for all $\Delta>0$, $E(A,\Delta)\geq E(A,1)$ with equality if and only if $\Delta=1$. Applying the change of variable $\Delta=\exp(\varepsilon)$ completes the proof.
\end{proof}

The following result gives the expansion of $E(A,e^\varepsilon)$, for fixed $A$ and as $\varepsilon\to 0$.

\begin{thm}[\textbf{Taylor expansion of the energy}]
Let $f\in \mathcal{F}$ and $A>0$, then, as $\varepsilon\to 0$,
\begin{equation} \label{ecreps}
E(A,e^\varepsilon) = E_0(A) + E_2(A) \varepsilon^2 + E_4(A) \varepsilon^4
+ O(\varepsilon^6) ,
\end{equation}
where 
\begin{equation} 
E_0(A) = E(A,1)= \frac{1}{2} \int_0^\infty 
\left[ \left( \theta_3 \right)^2 - 1 \right]{\rm d}\mu_f\left(\frac{t}{A} \right)
\end{equation}
is the energy of the square lattice,
\begin{equation} \label{E2A} 
E_2(A) =\frac{d^2}{d\varepsilon^2}\left[ E(A,e^\varepsilon)\right]_{|\varepsilon=0}
= \frac{1}{2} \int_0^\infty \left[ t \theta_3 \theta_3^{(1)}
- t^2 \left( \theta_3^{(1)} \right)^2 + t^2 \theta_3 \theta_3^{(2)} \right]
{\rm d}\mu_f\left(\frac{t}{A} \right) ,
\end{equation}
and 
\begin{eqnarray} 
E_4(A)  = \frac{d^4}{d\varepsilon^4}\left[ E(A,e^\varepsilon)\right]_{|\varepsilon=0}
& = & \frac{1}{24} \int_0^\infty 
\left[ t \theta_3 \theta_3^{(1)} - t^2 \left( \theta_3^{(1)} \right)^2
+ 7 t^2 \theta_3 \theta_3^{(2)} + 6 t^3 \theta_3 \theta_3^{(3)}
-6 t^3 \theta_3^{(1)} \theta_3^{(2)}\right. \nonumber \\ & & \left.
 + t^4 \theta_3 \theta_3^{(4)}
- 4 t^4 \theta_3^{(1)} \theta_3^{(3)} + 3 t^4 \left( \theta_3^{(2)} \right)^2
\right]{\rm d}\mu_f\left(\frac{t}{A} \right),
\end{eqnarray}
and where the theta function and its derivatives are written for simplicity as
\begin{equation}
\theta_3 := \theta_3\left({\rm e}^{-t}\right) , \qquad
\theta_3^{(n)} := \frac{{\rm d}^n}{{\rm d}t^n}
\theta_3\left({\rm e}^{-t}\right), n\in \N.
\end{equation} 
\end{thm}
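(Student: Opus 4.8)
The plan is to read the statement as a differentiation‑under‑the‑integral computation governed by the parity of the energy. Writing $\vartheta(s):=\theta_3(e^{-s})$ and setting
$$
G(\varepsilon,t):=\vartheta\!\left(te^{-\varepsilon}\right)\vartheta\!\left(te^{\varepsilon}\right),
$$
the integrand of (\ref{finall}) is $G(\varepsilon,t)-1$, and since $G(-\varepsilon,t)=G(\varepsilon,t)$ — the map $\varepsilon\mapsto-\varepsilon$ merely interchanges the two factors — the function $\varepsilon\mapsto E(A,e^\varepsilon)$ is even. Hence, \emph{once} I know it is smooth near $\varepsilon=0$, all odd‑order derivatives at $0$ vanish and Taylor's theorem immediately produces (\ref{ecreps}) with only even powers, the remainder being $O(\varepsilon^{6})$ because the next surviving term is $\varepsilon^6$; the coefficients are the Taylor coefficients $E_2=\tfrac1{2!}\,\partial_\varepsilon^2 E(A,e^\varepsilon)\big|_{0}$ and $E_4=\tfrac1{4!}\,\partial_\varepsilon^4 E(A,e^\varepsilon)\big|_{0}$.

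The hard part will be to justify exchanging $\partial_\varepsilon$ with $\int_0^\infty(\cdot)\,{\rm d}\mu_f(t/A)$, since $\mu_f$ is only a (possibly signed) Radon measure, so I need a dominating function that is integrable against $|\mu_f|$ uniformly for $\varepsilon$ in a neighborhood of $0$. Near $t=\infty$ this is painless: $\vartheta(s)=1+2e^{-s}+\cdots$, so $G(\varepsilon,t)-1$ and each of its $\varepsilon$‑derivatives decay exponentially (the extra polynomial factors $t^{k}$ coming from differentiation are harmless). Near $t=0$ the crude bound fails because $\vartheta(s)\sim\sqrt{\pi/s}$; here I would use the Jacobi modular identity $\vartheta(s)=\sqrt{\pi/s}\,\vartheta(\pi^2/s)$, which gives
$$
G(\varepsilon,t)=\frac{\pi}{t}\,\vartheta\!\left(\tfrac{\pi^2 e^{\varepsilon}}{t}\right)\vartheta\!\left(\tfrac{\pi^2 e^{-\varepsilon}}{t}\right).
$$
The prefactor $\pi/t$ is \emph{independent} of $\varepsilon$, hence annihilated by $\partial_\varepsilon$, while the remaining theta factors tend to $1$ with corrections exponentially small in $1/t$. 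Consequently every $\varepsilon$‑derivative of $G$ is exponentially small as $t\to0$ — in particular it decays strictly faster than $G-1$ itself, whose integral (\ref{final}) already converges for $f\in\mathcal F$. Combined with the exponential decay at infinity this yields the required uniform $|\mu_f|$‑integrable domination at every order, so $\varepsilon\mapsto E(A,e^\varepsilon)$ is in fact $C^\infty$ near $0$ and the interchange is licit; I expect this to be the only genuinely delicate point.

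Granting the interchange, the coefficients follow from a mechanical chain‑rule computation that I would organize through the Euler operator $\mathcal{D}:=-s\,{\rm d}/{\rm d}s$. Since $\partial_\varepsilon\vartheta(te^{-\varepsilon})=(\mathcal D\vartheta)(te^{-\varepsilon})$ and $\partial_\varepsilon\vartheta(te^{\varepsilon})=-(\mathcal D\vartheta)(te^{\varepsilon})$, iterating and evaluating at $\varepsilon=0$ (where both arguments equal $t$) gives, by the Leibniz rule,
$$
\partial_\varepsilon^{N} G\big|_{\varepsilon=0}=\sum_{k=0}^{N}\binom{N}{k}(-1)^{N-k}\big(\mathcal{D}^{k}\vartheta\big)(t)\,\big(\mathcal{D}^{N-k}\vartheta\big)(t),
$$
where each $(\mathcal D^{j}\vartheta)(t)$ expands into a polynomial in $t$ with the derivatives $\theta_3^{(m)}$ as coefficients (e.g. $\mathcal D\vartheta=-t\,\theta_3^{(1)}$ and $\mathcal D^2\vartheta=t\,\theta_3^{(1)}+t^2\theta_3^{(2)}$ at $s=t$). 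For $N=2$ this collapses to
$$
\partial_\varepsilon^{2} G\big|_{\varepsilon=0}=2\Big[t\,\theta_3\theta_3^{(1)}-t^2\big(\theta_3^{(1)}\big)^2+t^2\theta_3\theta_3^{(2)}\Big],
$$
and combining the Taylor factor $\tfrac1{2!}$ with the overall $\tfrac12$ of (\ref{finall}) reproduces exactly $E_2(A)$ as in (\ref{E2A}). The case $N=4$ is identical in spirit: the even‑index terms of the Leibniz sum reinforce while the sign $(-1)^{N-k}$ arranges the cross terms, and after collecting one obtains precisely twice the eight‑term bracket displayed for $E_4(A)$, so that $\tfrac12\cdot\tfrac1{4!}\,\partial_\varepsilon^{4}G|_0$ gives the stated $\tfrac1{24}$–integral. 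This final step is pure bookkeeping of the Faà‑di‑Bruno contributions and carries no conceptual difficulty beyond tracking the signs.
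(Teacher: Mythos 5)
Your proof is correct, but it takes a genuinely different route from the paper's. The paper works at the level of the double Gaussian sum: it expands $\sum_{j,k}e^{-j^2t\exp(-\varepsilon)}e^{-k^2t\exp(\varepsilon)}$ termwise in powers of $\varepsilon$, kills the odd orders by the antisymmetry of the summands under the interchange $j\leftrightarrow k$, and resums the even orders using $\sum_j j^{2n}e^{-j^2t}=(-1)^n\theta_3^{(n)}$; the analytic justification for interchanging this expansion with the infinite sum and with the integral against $\mu_f$ is essentially taken for granted there. You instead work directly with the closed-form integrand $G(\varepsilon,t)=\vartheta(te^{-\varepsilon})\vartheta(te^{\varepsilon})$: parity in $\varepsilon$ disposes of the odd orders, and the Euler operator $\mathcal{D}=-s\,\frac{d}{ds}$ combined with the Leibniz rule produces the even derivatives systematically, at any order. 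The main added value of your route is that you explicitly justify differentiation under the integral sign: the observation that the Jacobi modular identity makes the singular $\pi/t$ behaviour near $t=0$ appear as an $\varepsilon$-independent prefactor, so that every $\varepsilon$-derivative of $G$ is exponentially small as $t\to 0$, is exactly the right mechanism for a uniform $|\mu_f|$-integrable domination, a point the paper glosses over. (You do implicitly need absolute convergence of the Laplace representation, $\int_0^\infty e^{-r^2t}\,{\rm d}|\mu_f|(t)<\infty$, which yields finiteness of $|\mu_f|$ on $(0,1]$ and exponential integrability at infinity; this is anyway required to define the energy.) I checked your bookkeeping: $\partial_\varepsilon^2G|_{\varepsilon=0}$ and $\partial_\varepsilon^4G|_{\varepsilon=0}$ are exactly twice the brackets in the statement, so with the Taylor factors $\frac{1}{2!}$, $\frac{1}{4!}$ and the overall $\frac12$ you recover $E_2$ and $E_4$ precisely; note in passing that the identifications $E_2=\frac{d^2}{d\varepsilon^2}E|_{\varepsilon=0}$ and $E_4=\frac{d^4}{d\varepsilon^4}E|_{\varepsilon=0}$ displayed in the theorem omit these factorials, and your normalization is the one consistent with the integral formulas. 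In short: the paper's computation is more elementary and self-contained on the algebraic side, while yours is more rigorous analytically and generalizes mechanically to higher orders (e.g. to the $E_6$ coefficient used later for the tricritical point).
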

\begin{remark}
Note that because of the $\varepsilon\to -\varepsilon$ symmetry of the energy,
the expansion contains only {\em even} powers of $\varepsilon$.
\end{remark}
\begin{proof}
We use the fact  that $f\in C^{\infty}(0,\infty)$ and $f$ is absolutely summable
at infinity which implies that $E(A,\cdot)\in C^\infty(0,\infty)$ (see Remark \ref{regularity}).
Therefore, the wished expansion easily yields from the one of
the theta functions product, i.e. for all $t>0$ and as $\varepsilon\to 0$,
the quantity $\theta_3\left({\rm e}^{-t\exp(-\varepsilon)}\right)
\theta_3\left({\rm e}^{-t\exp(\varepsilon)}\right)$ is given by
\begin{eqnarray}
\sum_{j,k} {\rm e}^{-j^2 t\exp(-\varepsilon)} {\rm e}^{-k^2 t\exp(\varepsilon)} & = &
\sum_{j,k} {\rm e}^{-(j^2+k^2)t} \Big[ 1 + \left( j^2-k^2 \right) t \varepsilon
\nonumber \\ & &
+ \frac{1}{2} \left( -j^2-k^2-2j^2 k^2 t + j^4t + k^4 t \right) t \varepsilon^2
\nonumber \\ & &
+\frac{1}{6} \left( j^2-k^2 \right) \left( 1 - 3j^2 t - 3k^2 t
-2 j^2 k^2 t^2 +j^4 t^2 + k^2 t^2 \right) t \varepsilon^3
\nonumber \\ & &
+\frac{1}{24} \left( - j^2 - k^2 -2 j^2 k^2 t + 7 j^4 t + 7 k^4 t
- 6 j^6 t^2 - 6 k^6 t^2 \right. \nonumber \\ & & \left. 
+ 6 j^4 k^2 t^2 + 6 j^2 k^4 t^2 + j^8 t^3 + k^8 t^3 - 4 j^6 k^2 t^3
\right. \nonumber \\ & & \left. - 4 j^2 k^6 t^3 + 6 j^4 k^4 t^3 \right)
t \varepsilon^4 + O\left( \varepsilon^6\right) \Big] . 
\end{eqnarray}
The terms of odd orders $\varepsilon$ and $\varepsilon^3$ vanish
because of the antisymmetry of summands with respect to the interchange
of indices $j$ and $k$.
The terms of even orders $\varepsilon^2$ and $\varepsilon^4$ can be further
simplified by using the interchange of indices $j$ and $k$ and the sums
of type $\sum_j j^{2n} \exp(- j^2 t)$ with $n\in \N^*$ are equal to
$(-1)^n \theta_3^{(n)}$.
\end{proof}

\begin{remark}[\textbf{Connection with Statistical Physics}]
The present exact expansion (\ref{ecreps}) has its counterpart
in statistical physics where it represents a mean-field approximation for
a complicated statistical model in thermal equilibrium at some temperature,
known as the Landau free energy \cite{Bausch72,Landau37}.
The parameter $\varepsilon$ plays there the role of the order parameter
which vanishes in the disordered phase and takes non-zero values in
the ordered phase.
The general analysis of the expansion (\ref{ecreps}) in the context of
critical phenomena in statistical models can be found in many textbooks,
see e.g. \cite{Landau13,Toledano87}.
In what follows, we shall adopt this general analysis to our ground-state
problem.
\end{remark}

By (\ref{ecreps}), we have that, for all $A>0$, as $\varepsilon\to 0$,
$$
E(A,e^\varepsilon)-E(A,1) = E_2(A)\varepsilon^2 + E_4(A)\varepsilon^4
+O(\varepsilon^6) = \varepsilon^2\left[ E_2(A)+ E_4(A)\varepsilon^2
+ O(\varepsilon^4)\right].
$$
Therefore, it appears that the sign of $E_2(A)$ determines the one of
$E(A,e^\varepsilon)-E(A,1)$ for sufficiently small values of $\varepsilon>0$,
giving the optimality of the square lattice when $E(A,e^\varepsilon)-E(A,1)>0$
and the optimality of a non-square one when $E(A,e^\varepsilon)-E(A,1)<0$.

\begin{defi}
Let $f\in \mathcal{F}\backslash \mathcal{F}_+$, then any $A^*$ such that 
\begin{equation}\label{transpointenlarged}
E_2(A^*)=0
\end{equation}
with a change of sign for $E_2$ at $A=A^*$ is called a transition point.
\end{defi}

A simple condition can be written for insuring the existence of such transition point.

\begin{prop}[\textbf{Existence of a transition point}] \label{prop:transition} Let $f\in \mathcal{F}$. Then:
\begin{enumerate}
\item if $f\in \mathcal{F}_+$, then there is no transition point;
\item if $f\in \mathcal{F}\backslash \mathcal{F}_+$ such that $\mu_f<0$ on $(0,r_0)$ or on $(r_1,+\infty)$ for some $r_0, r_1>0$, then a transition point exists.
\end{enumerate}
\end{prop}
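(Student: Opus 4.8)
My plan is to reduce both statements to the sign of the single map $A\mapsto E_2(A)$ of \eqref{E2A}. Recognising the bracket in \eqref{E2A} as a fixed kernel $g(t):=t\,\theta_3\theta_3^{(1)}-t^2(\theta_3^{(1)})^2+t^2\theta_3\theta_3^{(2)}$ and performing the change of variables $s=t/A$ (so that $\mathrm d\mu_f(t/A)$ becomes $\mathrm d\mu_f(s)$ and $g(t)$ becomes $g(As)$), I would write
\[
E_2(A)=\frac12\int_0^\infty g(As)\,\mathrm d\mu_f(s).
\]
The whole argument then hinges on one property of the kernel, namely that $g(t)>0$ for every $t>0$. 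Granting this, part (1) is immediate: if $f\in\mathcal{F}_+$ then $\mu_f\ge0$, and the universal optimality result (Montgomery \cite{Montgomery88}) already gives that $\varepsilon=0$ is the unique minimiser of $E(A,e^\varepsilon)$, so dividing $E(A,e^\varepsilon)-E(A,1)\ge0$ by $\varepsilon^2$ and letting $\varepsilon\to0$ yields $E_2(A)\ge0$ for all $A$; a non-negative function cannot change sign, hence there is no transition point.

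For the kernel positivity I would argue as follows. For fixed $t>0$ introduce the probability weights $p_j=e^{-j^2t}/\theta_3(e^{-t})$ on $\Z$ and write $\langle\cdot\rangle_t$, $\mathrm{Var}_t$ for the associated expectation and variance. A direct rearrangement of the sums defining $\theta_3,\theta_3^{(1)},\theta_3^{(2)}$ gives
\[
g(t)=t\,\theta_3(e^{-t})^2\big(t\,\mathrm{Var}_t(j^2)-\langle j^2\rangle_t\big),
\]
and, using $\tfrac{\mathrm d}{\mathrm dt}\langle j^2\rangle_t=-\mathrm{Var}_t(j^2)$, this equals $-t\,\theta_3(e^{-t})^2\,\tfrac{\mathrm d}{\mathrm dt}\!\big(t\,\langle j^2\rangle_t\big)$. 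Thus $g>0$ is \emph{equivalent} to the strict decrease of $t\mapsto t\langle j^2\rangle_t$, i.e.\ to the strict convexity of $s\mapsto\log\theta_3(e^{-e^{s}})$, which is the non-degenerate infinitesimal form of Montgomery's theorem, since $g(t)=\tfrac12\partial_\varepsilon^2\big[\theta_3(e^{-te^{-\varepsilon}})\theta_3(e^{-te^{\varepsilon}})\big]_{\varepsilon=0}$ and $\varepsilon=0$ is a strict minimum for each $t$. I would confirm non-degeneracy at the two ends, where $g(t)\sim 2t^2e^{-t}$ as $t\to\infty$ (from $\theta_3=1+2e^{-t}+\cdots$) and $g(t)\sim2\pi^5t^{-3}e^{-\pi^2/t}$ as $t\to0^+$ (from the modular relation $\theta_3(e^{-t})=\sqrt{\pi/t}\,\theta_3(e^{-\pi^2/t})$), both manifestly positive, and interpolate by the monotonicity of $t\langle j^2\rangle_t$.

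Part (2) is then an asymptotic sign analysis. Since $g$ is positive, unimodal and decays (super-)exponentially at both ends, the mass of $s\mapsto g(As)$ concentrates near $s=0$ as $A\to\infty$ and near $s=+\infty$ as $A\to0^+$. Splitting the integral at $r_0$, the inner piece $\int_0^{r_0}g(As)\,\mathrm d\mu_f(s)$ keeps the sign of $\mu_f$ on $(0,r_0)$ (this is exactly where $g>0$ enters), while the outer piece is $O(e^{-cA})$ and hence negligible as $A\to\infty$; the mirror estimate holds at $r_1$ as $A\to0^+$. Hence the hypothesis $\mu_f<0$ on $(0,r_0)$ forces $E_2(A)<0$ for all large $A$, while $\mu_f>0$ near infinity (the complementary, positive part of the sign-changing measures under consideration) forces $E_2(A)>0$ for all small $A$; the alternative hypothesis $\mu_f<0$ on $(r_1,\infty)$ exchanges the roles of the two ends. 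As $A\mapsto E_2(A)$ is continuous (differentiation under the integral in \eqref{E2A} is legitimate by the decay of $f$), $E_2$ takes opposite signs at the two ends, and the intermediate value theorem produces a zero at which $E_2$ changes sign, i.e.\ a transition point $A^*$.

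The hard part is the global kernel positivity $g>0$ on all of $(0,\infty)$, equivalently the strict convexity of $\log\theta_3(e^{-e^{s}})$: the two endpoint expansions alone do not exclude an interior sign change of $g$, so the monotonicity of $t\langle j^2\rangle_t$ (or, equivalently, the non-degeneracy of the minimum in Montgomery's theorem) must be established with care. A secondary technical point is justifying the tail estimates and the interchange of limit and signed Radon measure in the concentration argument, which uses the decay $|f(r^2)|=O(r^{-2-\varepsilon})$ to control $\mu_f$ near $0$ together with a mild growth bound near $+\infty$.
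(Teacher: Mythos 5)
Your proposal stalls at exactly the point you yourself flag as ``the hard part'', and this is a genuine gap: the strict kernel positivity $g(t)>0$ for all $t>0$ is never established, and the route you sketch cannot establish it. Montgomery's result (the universal optimality proposition) gives, for each fixed $t>0$, that $\varepsilon=0$ is the unique minimizer of $\varepsilon\mapsto h_t(\varepsilon):=\theta_3\bigl(e^{-te^{-\varepsilon}}\bigr)\theta_3\bigl(e^{-te^{\varepsilon}}\bigr)$; since $g(t)=\tfrac12 h_t''(0)$, uniqueness of a minimizer yields only $g(t)\ge 0$ and can never force non-degeneracy of the second derivative (consider $x\mapsto x^4$). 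Your two endpoint asymptotics $g(t)\sim 2t^2e^{-t}$ and $g(t)\sim 2\pi^5t^{-3}e^{-\pi^2/t}$ are correct, but, as you concede, they do not exclude an interior zero, and ``interpolating by the monotonicity of $t\langle j^2\rangle_t$'' is circular, because by your own identity that strict monotonicity \emph{is} the statement $g>0$. The paper closes this point by citation rather than proof: the inequality $t\theta_3\theta_3^{(1)}+t^2\theta_3\theta_3^{(2)}-t^2\bigl(\theta_3^{(1)}\bigr)^2>0$ is a theorem of Faulhuber and Steinerberger \cite{Faulhuber17}, and the paper's entire proof of part (1) consists of this inequality together with $\mu_f\ge 0$.

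That said, your text contains the means to repair itself without \cite{Faulhuber17}, and your part (2) is a genuinely different route from the paper's. For part (1), your Montgomery argument (divide $E(A,e^\varepsilon)-E(A,1)\ge 0$ by $\varepsilon^2$ and let $\varepsilon\to 0$) already gives $E_2(A)\ge 0$ for all $A$ with no kernel input, and a non-negative function cannot change sign: part (1) is complete as you wrote it. For part (2), the paper simply invokes \cite[Prop.~4.4]{BetPetrache19}, whereas your concentration-plus-intermediate-value argument is self-contained; moreover it only needs $g\ge 0$ \emph{globally} (available from Montgomery as above, to prevent cancellation inside the inner integral $\int_0^{r_0}g(As)\,{\rm d}\mu_f(s)$) together with strict positivity of $g$ on the moving windows $[Aa,Ab]\subset(0,Ar_0)$ as $A\to\infty$, respectively on the corresponding windows as $A\to 0^+$, which your two asymptotic expansions already supply. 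Two points you relegate to asides should be made explicit: (i) the tail estimate requires $\int_0^\infty e^{-\delta s}\,{\rm d}|\mu_f|(s)<\infty$ for every $\delta>0$, i.e.\ absolute convergence of the representation \eqref{fgen}; and (ii) the hypothesis of part (2) must be read as ``$\mu_f<0$ on $(0,r_0)$ (resp.\ $(r_1,\infty)$) \emph{and} $\mu_f\ge 0$, not identically zero, on the complementary region'' --- your parenthetical about ``the complementary, positive part'' is doing real work there, since a measure that is negative near $0$ and non-positive elsewhere gives $E_2\le 0$ everywhere and admits no transition point at all. With the $g\ge 0$ substitution and (i)--(ii) spelled out, your argument becomes a correct, and more self-contained, alternative to the paper's two citations.
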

\begin{proof} 
It has been shown by Faulhuber and Steinerberger \cite{Faulhuber17} that
$$
\forall t>0,\quad t \theta_3 \theta_3^{(1)} + t^2 \theta_3 \theta_3^{(2)}
- t^2 \left( \theta_3^{(1)} \right)^2>0.
$$
Therefore, if $f\in \mathcal{F}_+$, then $\mu_f$ is nonnegative and it follows that $E_2(A)>0$ for all $A>0$.\\
Furthermore, the second point follows from \cite[Prop. 4.4]{BetPetrache19} applied to the square lattice among rectangular lattices in dimension $d=2$.
\end{proof}
\begin{examples}
Point $(1)$ of the previous result can be illustrated by any Riesz potential, whereas point $(2)$ is covered for instance by the double-Yukawa potential we study in the next section of the paper, given for all $r>0$ by
\begin{equation*}
f(r) = v_1 \frac{\exp{(-\kappa_1 r)}}{r} - v_2 \frac{\exp{(-\kappa_2 r)}}{r}, \quad v_1 > v_2 >0,\qquad \kappa_1 > \kappa_2 > 0.
\end{equation*}

\end{examples}

We now give conditions such that a second order phase transition (see \cite{Samaj13}), i.e. the transition for the minimizer of $\varepsilon\to E(A,e^\varepsilon)$ is continuous.

\begin{prop}[\textbf{Second order phase transition}]\label{prop:2ndorder}
Let $f\in \mathcal{F}$ and $A^*$ be a transition point such that:
\begin{enumerate}
\item $E_2$ is strictly decreasing in the neighborhood of $A^*$;
\item $E_4(A^*)>0$.
\end{enumerate}
Then there exists $A_0>0$ and $A_1>0$ such that
\begin{itemize}
\item[•] if $A_0<A<A^*$, then $\varepsilon=0$ is the unique minimizer of $\varepsilon\to E(A,e^\varepsilon)$;
\item[•] if $A^*<A<A_1$, any local minimizer $\varepsilon$ of $\varepsilon\to E(A,e^\varepsilon)$ satisfies the following asymptotics, as $A\to A^*$:
$$
\varepsilon=\sqrt{\frac{-\frac{dE_2}{dA}(A^*)}{2E_4(A^*)}}\sqrt{A-A^*}+o(\sqrt{A-A^*}).
$$
\end{itemize}
\end{prop}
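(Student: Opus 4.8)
The plan is to treat the expansion \eqref{ecreps} as a Landau free energy in the order parameter $\varepsilon$ and to run the standard bifurcation analysis, made rigorous with the implicit function theorem. I would set $g(A,\varepsilon):=E(A,e^\varepsilon)-E_0(A)$. By the $\varepsilon\to-\varepsilon$ symmetry together with the Taylor expansion theorem, $g$ is even in $\varepsilon$ with a controlled remainder, so to the needed order I would write $g(A,\varepsilon)=\Phi(A,u)$ with $u=\varepsilon^2$ and $\Phi(A,u)=E_2(A)\,u+E_4(A)\,u^2+O(u^3)$. Hypothesis (1) gives $\frac{dE_2}{dA}(A^*)<0$, hence $E_2>0$ on a left neighbourhood and $E_2<0$ on a right neighbourhood of $A^*$; hypothesis (2) and continuity give $E_4>0$ near $A^*$.

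For the bifurcated branch ($A>A^*$) and the asymptotics I would apply the implicit function theorem to the critical-point equation $F(A,u):=\partial_u\Phi(A,u)=E_2(A)+2E_4(A)u+O(u^2)$. At $(A^*,0)$ one has $F=0$, $\partial_uF(A^*,0)=2E_4(A^*)>0$ and $\partial_AF(A^*,0)=\frac{dE_2}{dA}(A^*)<0$, so there is a unique $C^1$ solution $u=u(A)$ with $u(A^*)=0$ and
\[
u'(A^*)=-\frac{\partial_AF}{\partial_uF}(A^*,0)=\frac{-\frac{dE_2}{dA}(A^*)}{2E_4(A^*)}>0.
\]
For $A\in(A^*,A_1)$ this yields $u(A)>0$, a nondegenerate local minimum of $\Phi(A,\cdot)$ (since $\partial_u^2\Phi=2E_4+O(u)>0$ there), whence $\varepsilon=\pm\sqrt{u(A)}$ are minima of $g(A,\cdot)$; expanding $\sqrt{u(A)}=\sqrt{u'(A^*)(A-A^*)+o(A-A^*)}$ gives exactly the claimed $\sqrt{A-A^*}$ asymptotic, with $A_1$ supplied by the IFT neighbourhood.

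For $A\in(A_0,A^*)$ the same branch satisfies $u(A)<0$, i.e. it leaves the physical region $u\ge0$; thus $\Phi(A,\cdot)$ has no interior critical point near $0$, and since $\partial_u\Phi(A,0)=E_2(A)>0$ the value $\varepsilon=0$ is a strict local minimum. To promote this to the stated \emph{global} uniqueness I would combine the local picture with coercivity of $\varepsilon\mapsto E(A,e^\varepsilon)$: for the potentials of interest $f$ is singular and repulsive at the origin, so the nearest-neighbour contribution forces $E(A,e^\varepsilon)\to+\infty$ as $\varepsilon\to\pm\infty$, after which a count of the critical points excludes competitors.

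The routine part is the IFT computation producing the bifurcation asymptotic. The two delicate points are, first, justifying \eqref{ecreps} with a remainder uniform in $A$ near $A^*$ together with differentiation under the integral sign (so that $E_2,E_4$ are continuous and IFT genuinely applies), and, second, the global minimality for $A<A^*$: the local analysis only controls a neighbourhood of $\varepsilon=0$, and ruling out far-away minima truly requires the coercivity and critical-point input above. I expect this global step to be the main obstacle.
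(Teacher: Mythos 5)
Your proposal follows essentially the same route as the paper's proof: both extract the stationarity condition $2E_2(A)\varepsilon + 4E_4(A)\varepsilon^3 + O(\varepsilon^5)=0$ from the expansion \eqref{ecreps}, Taylor-expand $E_2$ at $A^*$ using hypothesis (1), and read off the two branches $\varepsilon=0$ and $\varepsilon^2 = \frac{-\frac{dE_2}{dA}(A^*)}{2E_4(A^*)}\left( A - A^* \right)+o(A-A^*)$, your implicit-function-theorem formulation in the variable $u=\varepsilon^2$ being a rigorous dressing of that same computation. The global-uniqueness issue you flag for $A_0<A<A^*$ is not a divergence from the paper but a limitation shared with it: the paper's proof simply asserts that the branch $\varepsilon=0$ is ``dominant in a certain region $A_0<A<A^*$'' without further argument, so your coercivity/critical-point-counting sketch, if completed, would actually go beyond what the paper establishes.
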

\begin{remark}
The same result can be written when $E_2$ is strictly increasing in the neighborhood of $A^*$, with the reverse condition $E_4(A^*)<0$, which simply exchanges the regimes of optimality.
\end{remark}
\begin{proof}
Since a transition point exists, a simple Taylor expansion of $E_2$ as $A\to A^*$ -- ensured by the fact that $E(\cdot,\Delta)\in C^\infty(\R_+^*)$ (see Remark \ref{regularity})-- reads
\begin{align*}
E_2(A)&=E_2(A^*)+\frac{d E_2}{d A}(A^*)(A-A^*)+o(A-A^*)\\
&=-\frac{d E_2}{d A}(A^*)(A^*-A)+o(A-A^*).
\end{align*}
By assumption, we know that $\frac{d E_2}{d A}(A^*)<0$ and let us write $b:=-\frac{d E_2}{d A}(A^*)>0$ for simplicity in such a way that, as $A\to A^*$,
\begin{equation}\label{E2}
E_2(A)=b(A^*-A)+o(A^*-A).
\end{equation}
The value of $\varepsilon$ which locally minimizes $E(A,\cdot)$ satisfies the following asymptotic stationarity condition we get from (\ref{ecreps}):
\begin{equation}\label{crit}
\frac{\partial E(A,e^\varepsilon)}{\partial \varepsilon}
= 2 E_2(A) \varepsilon + 4 E_4(A) \varepsilon^3 + O(\varepsilon^5) = 0 .    
\end{equation}
Using \eqref{E2}, this condition, for $A$ in the neighborhood of $A^*$, is therefore equivalent to the couple of equations
\begin{equation}
\varepsilon = 0 \qquad \textnormal{or} \qquad \varepsilon^2 =
\frac{b}{2 E_4(A^*)} \left( A - A^* \right) +o(A-A^*) .
\end{equation}  
We know that the first solution is dominant in a certain region
$A_0<A<A^*$ with $A_0\ge 0$.
Since $\varepsilon^2$ must be a real positive number and $E_4(A^*)>0$,
the second solution is
\begin{equation} \label{order}
\varepsilon= \sqrt{\frac{b}{2 E_4(A^*)}}
\left( A - A^* \right)^{1/2} +o(\left( A - A^* \right)^{1/2}) ,
\end{equation}  
dominant in the region $A^*<A<A_1$ for a certain $A_1>0$,
which completes the proof.
\end{proof}
\begin{remark}
Notice that indeed the transition at $A=A^*$ from $\varepsilon=0$
to $\varepsilon$ given by Eq. (\ref{order}) is continuous.
Consequently, the energy and its first derivative with respect to $A$ are
continuous as well.\\
Furthermore, the singular behaviour $\varepsilon\propto(A-A^*)^{\beta}$
as $A\to A^*$, $A>A^*$, of the minimizer (and actually any local one) defines the critical exponent $\beta$
which in our case acquires the mean-field value $1/2$.
This critical exponent is universal in the sense that it does not depend on
the particular form of the interaction potential $f$ as long as the (general) assumptions are satisfied.
\end{remark}

Let us now assume that our interaction potential $f=f_\alpha$ depends on
a parameter $\alpha\in \R$.
Thus, a transition point satisfying the assumptions of second-order
transition is $\alpha$-dependent and one can plot the graph of
$\alpha\mapsto A^*(\alpha)$.
We therefore get a curve of second-order transitions between square and
non-square rectangular lattices as minimizers of $E(A,\cdot)$.
In analogy with statistical mechanics, we expect only
one curve of second-order transition.
This curve exists say for $\alpha>\alpha^t$ and necessarily
terminates at some $\alpha=\alpha^t$ once $A^*(\alpha=\alpha^t)$ satisfies,
besides $E_2(A^*(\alpha^t))=0$, also the condition $E_4(A^*(\alpha^t))=0$
which changes sign at this point.

\begin{defi}
Let $f_\alpha\in \mathcal{F}$ depending on a real parameter $\alpha$. We say that $A^t>0$ and $\alpha^t$ are coordinates of
a tricritical point if it is a transition point satisfying both conditions
\begin{equation} \label{tricritical}
E_2(A^t(\alpha^t)) = 0 \qquad \textnormal{and} \qquad E_4(A^t(\alpha^t)) = 0,
\end{equation}
with a change of sign for $E_4$ at $A^t(\alpha^t)$.
\end{defi}

Beyond this point, i.e., for $\alpha < \alpha^t$,
transitions between the square and rectangle lattices
become of first order, i.e., there is a discontinuity in $\varepsilon$
from 0 to a finite value.
This means that our expansion (\ref{ecreps}) no longer applies and 
the curve of first-order transitions can be located only numerically
by plotting the energy in the whole interval $\Delta\ge 1$.
Using the same method as before, we can derive the singular behavior of
$\varepsilon$ at the tricritical value of $\alpha=\alpha^t$,
in the region $A>A^t$.

\begin{prop}[\textbf{Transition at the tricritical point}]
Let $f_\alpha\in \mathcal{F}$ depending on a real parameter $\alpha$ and $(\alpha^t,A^t)$ be the coordinates of
a tricritical point such that
\begin{enumerate}
\item $E_2$ is strictly decreasing function of $A$ in the neighborhood of $A^t$;
\item $E_6(A^t)=\frac{d^6}{d\varepsilon^6}
\left[E(A,e^\varepsilon)\right]_{|\varepsilon=0}>0$.
\end{enumerate}
Therefore, there exists $A_2>0$ such that for $A^t<A<A_2$, any local minimizer of
$\varepsilon\to E(A,e^\varepsilon)$ satisfies the following asymptotics as
$A\to A^t$:
$$
\varepsilon = \sqrt[4]{\frac{-\frac{d E_2}{dA}(A^t)}{3 E_6(A^t)}}
(A-A^t)^{1/4}+o\left((A-A^t)^{1/4} \right).
$$
\end{prop}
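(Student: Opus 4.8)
The plan is to mirror the proof of Proposition~\ref{prop:2ndorder}, but to carry the Taylor expansion of the energy one order further: since we now sit at a tricritical point where $E_4(A^t)=0$, the quartic term no longer controls the leading behaviour and the sextic term must be retained. First I would extend the expansion \eqref{ecreps} to sixth order. Using exactly the analyticity and absolute-summability argument of the Taylor expansion theorem — expanding the theta product $\theta_3({\rm e}^{-t\exp(-\varepsilon)})\theta_3({\rm e}^{-t\exp(\varepsilon)})$ to order $\varepsilon^6$ and integrating against ${\rm d}\mu_f(t/A)$ — one obtains, as $\varepsilon\to 0$,
\[
E(A,e^\varepsilon) = E_0(A) + E_2(A)\varepsilon^2 + E_4(A)\varepsilon^4 + E_6(A)\varepsilon^6 + O(\varepsilon^8),
\]
where the $\varepsilon\to-\varepsilon$ symmetry again kills all odd powers and where $E_6(A)$ is the coefficient attached to the sixth derivative appearing in assumption (2).

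Next, for a minimizer $\varepsilon\neq 0$ I would write the asymptotic stationarity condition obtained by differentiating this expansion and dividing by $2\varepsilon$:
\[
E_2(A) + 2E_4(A)\varepsilon^2 + 3E_6(A)\varepsilon^4 + O(\varepsilon^6) = 0,
\]
the factor $3$ arising as $6/2$. I then Taylor-expand the coefficients around $A^t$. Setting $b:=-\frac{d E_2}{d A}(A^t)>0$ (positive since $E_2$ is strictly decreasing and vanishes at $A^t$), one has $E_2(A)=-b(A-A^t)+o(A-A^t)$, while $E_4(A)=O(A-A^t)$ because $E_4(A^t)=0$, and $E_6(A)=E_6(A^t)+o(1)$. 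The conceptual heart of the argument is the anticipated scaling $\varepsilon\propto(A-A^t)^{1/4}$: under it the quartic contribution $E_4(A)\varepsilon^2$ is of order $(A-A^t)^{3/2}=o(A-A^t)$ and therefore drops out of the leading balance, which is struck between $E_2(A)\sim -b(A-A^t)$ and $3E_6(A^t)\varepsilon^4$. Solving $-b(A-A^t)+3E_6(A^t)\varepsilon^4=0$ yields the claimed quartic-root asymptotics, the positivity of $E_6(A^t)$ guaranteeing a real positive value of $\varepsilon^4$.

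To make this dominant-balance heuristic rigorous I would introduce the rescaled unknown $u$ via $\varepsilon=(A-A^t)^{1/4}u$ and substitute it into the stationarity equation. After dividing by $(A-A^t)$, every term carrying $E_4$ or the $O(\varepsilon^6)$ remainder acquires a positive power of $(A-A^t)^{1/2}$, so the equation converges, as $A\to A^t$, to the limiting polynomial relation $-b+3E_6(A^t)u^4=0$. Its root $c=\big(b/(3E_6(A^t))\big)^{1/4}$ is simple, since the $u$-derivative $12E_6(A^t)c^3$ is nonzero because $c,E_6(A^t)>0$; the implicit function theorem then provides, for $A$ in a one-sided neighbourhood $A^t<A<A_2$, a solution $u(A)\to c$, i.e. $\varepsilon=(A-A^t)^{1/4}(c+o(1))$, which is exactly the stated asymptotic.

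Finally I would verify that this critical point is genuinely the minimizer. For $A$ slightly larger than $A^t$ one has $E_2(A)<0$, so $\varepsilon=0$ is a local maximum of $\varepsilon\mapsto E(A,e^\varepsilon)$ and the minimum over a small neighbourhood of $0$ is attained at a nonzero critical point; the expansion identifies it with the branch found above, and a continuity argument ensures the minimizer tends to $0$ as $A\to A^t$, so that it indeed lies in the small-$\varepsilon$ regime where the expansion is valid. I expect the main technical obstacle to be precisely this rigorous justification of the dominant balance — controlling the $O(\varepsilon^8)$ remainder uniformly in $A$ near $A^t$ so that the rescaled equation really converges, and confirming that the $E_4$ term cannot promote itself to leading order — rather than the purely algebraic extraction of the exponent $1/4$.
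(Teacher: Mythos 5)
Your proposal is correct and follows essentially the same route as the paper: the paper's proof consists precisely of extending the expansion \eqref{ecreps} to order $\varepsilon^6$, writing the stationarity condition \eqref{crit} to order $\varepsilon^5$, and repeating the argument of Proposition \ref{prop:2ndorder}, where at the tricritical point $E_4(A^t)=0$ forces the leading balance between $E_2(A)\sim -b(A-A^t)$ and $3E_6(A^t)\varepsilon^4$. Your rescaling $\varepsilon=(A-A^t)^{1/4}u$ with the implicit function theorem, and the check that the nonzero critical point is indeed the minimizer, merely make rigorous what the paper leaves implicit.
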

\begin{remark}
As Proposition \ref{prop:2ndorder}, the same result for the $A<A^t$ regime
can be written when $E_2$ is strictly increasing in the neighborhood of $A^t$
and $E_6(A^t)<0$.
\end{remark} 
\begin{proof}
This follows from \eqref{E2} and \eqref{crit} expanded to order $5$ --
coming from the Taylor expansion of the energy to order $6$, since $E\in C^\infty(\R_+^*\times \R_+^*)$, as
$E(A,e^\varepsilon)=E_0(A)+E_2(A)\varepsilon^2+E_4(A)\varepsilon^4
+E_6(A)\varepsilon^6+o(\varepsilon^8)$ -- and following
exactly the same lines as in the proof of
Proposition \ref{prop:2ndorder}.
\end{proof}
\begin{remark}
For $\alpha=\alpha^t$, we therefore get the singular
behavior $\varepsilon\propto (A-A^t)^{1/4}$ as $A\to A^t$, $A>A^t$, so that
the critical exponent at the tricritical point is $\beta^t=1/4$ which
is again universal up to the assumptions.
\end{remark}

The existence of transition and tricritical points and the second-order
transition curves depends on the form of the interaction potential $f$
(see e.g. Proposition \ref{prop:transition} for the transition point).
In the next two sections, we shall present explicit results for
the double Yukawa and Yukawa-Coulomb potentials which are combinations of two
(repulsive and attractive) terms.
Lattice systems with these potentials exhibit in certain regions
of model's parameters continuous as well as discontinuous
structural transition from the square to rectangular lattices.

\renewcommand{\theequation}{4.\arabic{equation}}
\setcounter{equation}{0}

\section{Double Yukawa potential} \label{Sec4}
\textbf{Potential, parameters constraints and energy.}
The double Yukawa interaction potential $f\in \mathcal{F}$ is defined on
$(0,+\infty)$ by
\begin{equation}\label{doubleyu}
f(r) = v_1 \frac{\exp{(-\kappa_1 r)}}{r} - v_2 \frac{\exp{(-\kappa_2 r)}}{r},
\quad v_1 > v_2 >0,\quad\kappa_1 > \kappa_2 > 0,
\end{equation}
where we choose the parameters in such a way that the first repulsive term
dominates at small distances whereas the second is attractive for large one.
Furthermore, the corresponding measure, given by the inverse Laplace
transform of $r\mapsto f(\sqrt{r})$, reads as
\begin{equation} \label{muft}
{\rm d}\mu_f(t) =
\left[ v_1\exp{\left(-\frac{\kappa_1^2}{4t}\right)}
-v_2\exp{\left(-\frac{\kappa_2^2}{4t}\right)}\right]
\frac{{\rm d}t}{\sqrt{\pi t}}.
\end{equation}

The potential (\ref{doubleyu}) is supposed to have an attractive minimum
$f(r_{\min})<0$ at some $r=r_{\min}$.
For simplicity, we set $r_{\min}=1$ and $f(1)=-1$, reducing in this way
the number of parameters four by two. We keep the independent parameters
$(v_1,\kappa_1)$ since
\begin{equation} \label{v2k2}
\kappa_2 = \frac{\kappa_1 v_1-{\rm e}^{\kappa_1}}{v_1+{\rm e}^{\kappa_1}}
\quad \textnormal{and}\quad
v_2 = \frac{{\rm e}^{\kappa_2-\kappa_1}(1+\kappa_1)v_1}{1+\kappa_2} .  
\end{equation} 
Since $\kappa_2>0$, the parameters $v_1$ and $\kappa_1$ are constrained
to the subspace
\begin{equation} \label{subspace}
v_1 > \frac{{\rm e}^{\kappa_1}}{\kappa_1} .  
\end{equation}

The energy per particle for the double Yukawa potential reads,
given \eqref{v2k2}, as
\begin{equation} \label{erec}
E(A,\Delta;v_1,\kappa_1) = \frac{1}{2\sqrt{A\pi}}
\int_0^\infty \left(v_1
e^{-\frac{\kappa_1^2 A}{4t}}-v_2 e^{-\frac{\kappa_2^2 A}{4t}}\right)
\left[ \theta_3\left({\rm e}^{-t\Delta}\right)
\theta_3\left({\rm e}^{-t/\Delta}\right) - 1 \right]\frac{{\rm d} t}{\sqrt{t}}.
\end{equation}
Notice that the exponential terms with non-zero $\kappa_1$ and $\kappa_2$
make the integral convergent for small $t$ when
$\theta_3\left({\rm e}^{-t\Delta}\right)
\theta_3\left({\rm e}^{-t/\Delta}\right)\sim 1/t$ as $t\to 0$.

\begin{figure}[tbp]
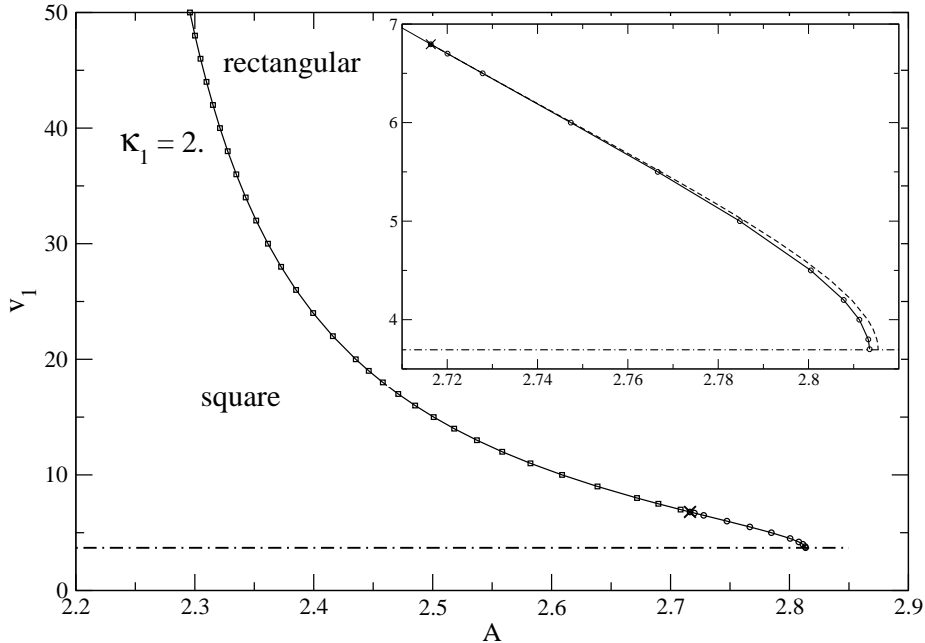

\centering
\includegraphics[clip,height=8.5cm]{fig1a.eps}\lapbox[5cm]{-7.2cm}{
\raisebox{3.4cm}{\includegraphics[clip,height=4.9cm]{fig1b.eps}}}
\caption{Phase diagram in the $(A,v_1)$-plane for the double Yukawa
potential with the fixed parameter $\kappa_1=2$.
The critical curve of second-order transitions between the square and
rectangular lattices is marked by open squares interconnected via
a solid line.   
The cross marks the tricritical point.
The discontinuous first-order transitions are indicated by open circles.
The dashed line in the inset, given by $E_2(A)=0$, represents an artificial
extension of the second-order transitions to the region dominated
by first-order transitions.
The dash-dotted straight line $v_1={\rm e}^{\kappa_1}/\kappa_1={\rm e}^2/2$
denotes the border of $v_1$-values.}
\label{fig1}
\end{figure}

\medskip

Let us now show that the system always admits a transition point.

\begin{lemma}[\textbf{Existence of transition points}]\label{lem:transyukawa}
For any fixed $(v_1,\kappa_1)$, $E$ admits a transition point
$A^*(v_1,\kappa_1)$.
\end{lemma}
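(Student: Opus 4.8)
The plan is to reduce the claim to the second point of Proposition \ref{prop:transition}, which guarantees a transition point as soon as $f\in\mathcal{F}\setminus\mathcal{F}_+$ and $\mu_f$ is negative near $0$ (or near $+\infty$). So the whole argument amounts to checking these two membership/sign conditions for the double Yukawa potential.

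First I would record that $f\in\mathcal{F}$: the exponential prefactors in (\ref{doubleyu}) force $|f(r^2)|=O(r^{-2-\varepsilon})$ for every $\varepsilon>0$, and the explicit formula (\ref{muft}) exhibits the Laplace representation (\ref{fgen}). I would also note that the constraint (\ref{subspace}) precisely encodes $0<\kappa_2<\kappa_1$, which is the only ordering I will use.

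The core step is the sign analysis of the density $\rho_f$ defined by ${\rm d}\mu_f(t)=\rho_f(t)\,{\rm d}t$. Factoring out the slower-decaying exponential,
$$
\rho_f(t)=\frac{e^{-\kappa_2^2/(4t)}}{\sqrt{\pi t}}\left[v_1\,e^{-(\kappa_1^2-\kappa_2^2)/(4t)}-v_2\right],
$$
and using $\kappa_1>\kappa_2>0$, the bracketed factor tends to $-v_2<0$ as $t\to0^+$ because $e^{-(\kappa_1^2-\kappa_2^2)/(4t)}\to0$. By continuity of $\rho_f$ there is thus $r_0>0$ with $\rho_f<0$ on $(0,r_0)$; in particular $\mu_f$ is signed, so $f\notin\mathcal{F}_+$, and $\mu_f<0$ on $(0,r_0)$. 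Applying Proposition \ref{prop:transition}(2) then produces the transition point $A^*(v_1,\kappa_1)$.

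I do not expect a serious obstacle here: once the small-$t$ asymptotics are isolated, everything is routine. The only mild subtlety is to ensure the negativity is of the measure on a genuine interval and not merely at isolated points, which is immediate from continuity of $\rho_f$; one could equally attempt the argument at $+\infty$, but there $\rho_f(t)\sim(v_1-v_2)/\sqrt{\pi t}>0$, so it is the behavior near the origin that supplies the required sign.
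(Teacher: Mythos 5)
Your proposal is correct and follows essentially the same route as the paper: both establish that the density of $\mu_f$ is negative on an interval $(0,r_0)$ and then invoke Proposition \ref{prop:transition}(2). The only cosmetic difference is that the paper solves the sign inequality exactly, obtaining $r_0=(\kappa_1^2-\kappa_2^2)/\bigl(4\ln(v_1/v_2)\bigr)$, whereas you deduce negativity near the origin by a limit-and-continuity argument, which is equally valid.
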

\begin{proof}
For $t>0$, we have, since $v_1>v_2$ and $\kappa_1>\kappa_2$,
\begin{equation*}
v_1\exp{\left(-\frac{\kappa_1^2}{4t}\right)}
-v_2\exp{\left(-\frac{\kappa_2^2}{4t}\right)}<0 \iff t<
\frac{(\kappa_1^2-\kappa_2^2)}{4\ln\left(\frac{v_1}{v_2} \right)}=:r_0.
\end{equation*}
It follows that $\mu_f<0$ on $(0,r_0)$ and therefore, by Proposition \ref{prop:transition}, $E$ admits a transition point.
\end{proof}
\medskip

\textbf{Numerical investigation.} Let us fix one of the two independent Yukawa parameters, say $\kappa_1=2$. By Lemma \ref{lem:transyukawa}, for any given $v_1$, $E$ admits a transition point $A^*(v_1)$. The phase diagram in the $(A,v_1)$-plane is pictured in Fig. \ref{fig1}. 
Eq. (\ref{transpointenlarged}) was used to calculate the critical curve
of second-order transitions defined by 
$$
\mathcal{C}:=\{(A^*,v_1(A^*)) : A^* \textnormal{ transition point}\}
$$
between the square and rectangular phases, marked by open squares
interconnected via a solid line.
For a fixed $v_1$, the square (rectangular) lattice minimizes the energy
in the whole interval $0\le A\le A^*$ ($A>A^*$). We observe the following:
\begin{enumerate}
\item $A^*\mapsto v_1(A^*)$ is decreasing.
\item There exists a finite ``minimal'' value of $A^*$,
$A^*_{\min}\approx 2.18626$, such that 
$$\displaystyle \lim_{A\to A^*_{\min} \atop A>A^*_{\min}} v_1(A_{\min}^*) = +\infty.
$$
\item The critical curve $\mathcal{C}$ ends up at the tricritical point
with the coordinates $A^t\approx 2.7163619942262467$ and
$v_1^t:=v_1(A^t)\approx 6.7951845011079$, denoted by the cross.
\end{enumerate}
For ${\rm e}^{\kappa_1}/\kappa_1<v_1<v_1^t$, the transition between the square
and rectangular lattices is of the first order, see open circles in
Fig. \ref{fig1}.
For comparison, an artificial prolongation of the critical curve $v_1(A^*)$
into this region is indicated by the dashed line in the inset
of the figure. 

\begin{figure}[tbp]
\centering
\includegraphics[clip,width=0.7\textwidth]{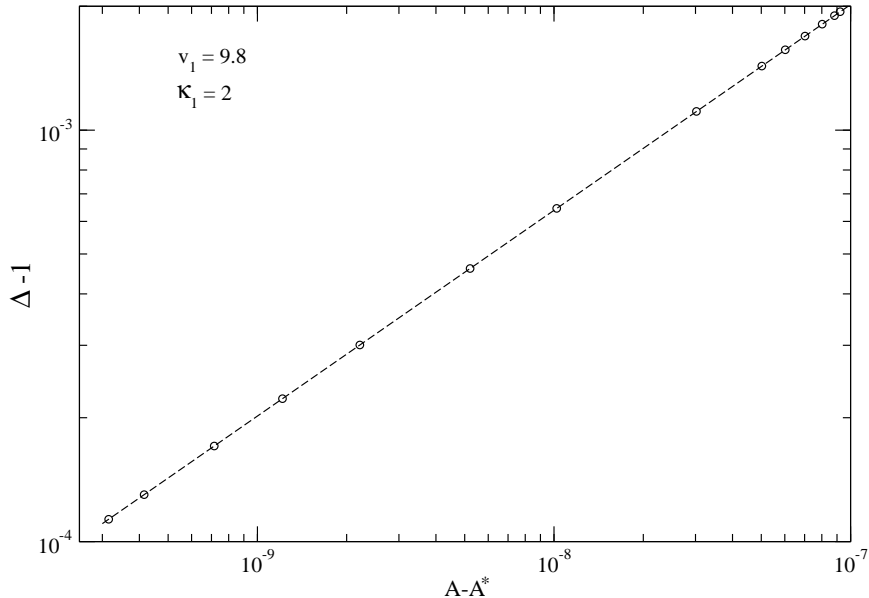}
\caption{The double Yukawa potential with the parameters $\kappa_1=2$ and
$v_1=9.8$, corresponding to the critical value of $A^*\approx 2.61449322978$. 
The log-log plot of $\Delta-1$ versus $A-A^*$ is presented, numerical data
are represented by open circles interconnected via a dashed line.
The fitting via $\Delta-1\propto(A-A^*)^{\beta}$ yields the critical exponent
$\beta^*\approx 0.50003$.}
\label{fig2}
\end{figure}

\medskip

\textbf{Singularity in the $A>A^*$ and $A>A^t$ regimes.} To document the singular behaviour of $\varepsilon\sim \Delta-1$
close to the critical curve, let us choose $\kappa_1=2$ and $v_1=9.8$
and the corresponding critical value of the inverse density
$A^*\approx 2.61449322978$. 
For $A$ slightly larger than $A^*$, the rectangle energy is minimized
with respect to $\Delta$.
The log-log plot of the obtained $\Delta-1$ versus $A-A^*$ 
is presented in Fig. \ref{fig2}.
The data were fitted according to $\Delta-1\propto(A-A^*)^{\beta}$.
The obtained critical exponent $\beta\approx 0.50003$ is very close to
the anticipated mean-field exponent $\beta=1/2$.

\begin{figure}[tbp]
\centering
\includegraphics[clip,width=0.7\textwidth]{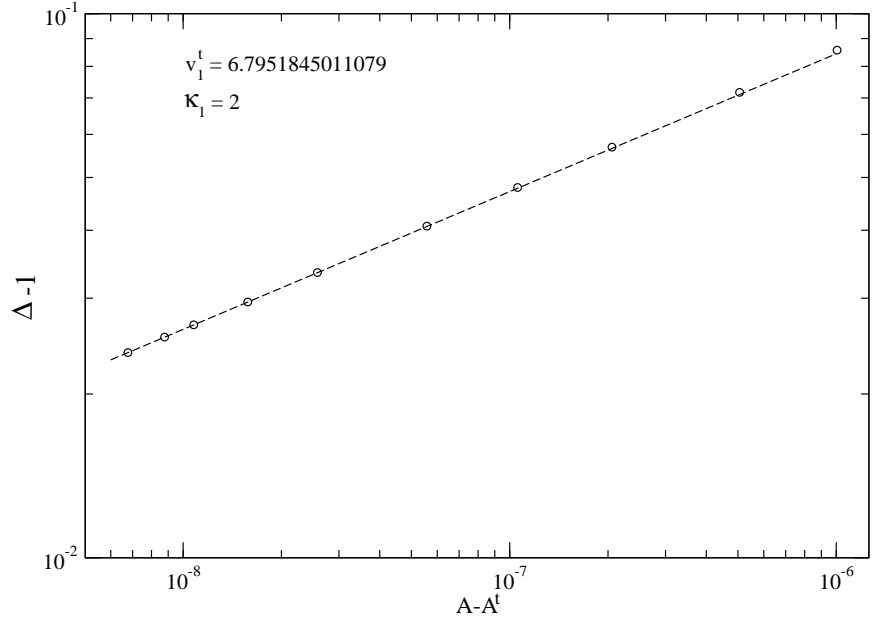}
\caption{The double Yukawa potential with the parameter $\kappa_1=2$,
the corresponding tricritical point has coordinates
$A^t=2.7163619942262467\ldots$ and $v_1^t=6.7951845011079\ldots$.
The log-log plot of $\Delta-1$ versus $A-A^t$ is presented in the
rectangular region $A>A^t$, numerical data are denoted by open circles
interconnected via a dashed line.
The fitting via $\Delta-1\propto(A-A^*)^{\beta}$ yields the tricritical
exponent $\beta^t\approx0.253$.}
\label{fig3}
\end{figure}

As concerns the tricritical point for the double Yukawa parameter $\kappa_1=2$,
see the cross in Fig. \ref{fig1}, its coordinates
$A^t=2.7163619942262467\ldots$ and $v_1^t=6.7951845011079\ldots$
were calculated by using the formula (\ref{tricritical}).
The numerical calculation of the deviation $\Delta-1$ in the region of
the rectangular lattice $A>A_t$ close to the tricritical point was made.
The log-log plot of numerical data in Fig. \ref{fig3} can be fitted as
$\Delta-1\propto(A-A^t)^{\beta^t}$, the obtained exponent $\beta^t\approx0.253$
is reasonably close to the expected mean-field value $\beta^t=1/4$.

\begin{figure}[tbp]
\centering
\includegraphics[clip,width=0.7\textwidth]{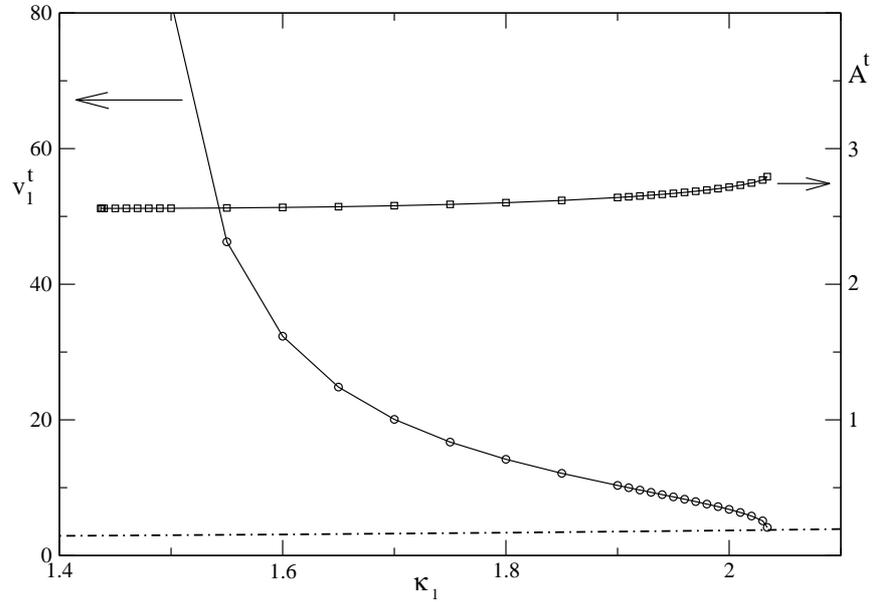}
\caption{The components $(A^t,v_1^t)$ of the tricritical points
as the functions of the Yukawa parameter $\kappa_1$.
The tricritical point exists only if $\kappa_1\in (\kappa_1^L,\kappa_1^U)$
where $\kappa_1^L \approx 1.436$ and $\kappa_1^U\approx 2.03414$ are
the lower and upper limits, respectively.  
The values of $A^t$ are represented by squares and the ones of $v_1^t$
by circles.
The dash-and-dot line corresponds to the curve $v_1={\rm e}^{\kappa_1}/\kappa_1$, 
there are no physical solutions below it.}
\label{fig4}
\end{figure}

\medskip

\textbf{About the transition/tricritical points and their $\kappa_1$-dependence.} The pair of coordinates of the tricritical point $A^t$ and $v_1^t$ 
can be calculated by using Eq. (\ref{tricritical}) for any value of
$\kappa_1$, the obtained results are presented in Fig. \ref{fig4}.
It turns out that a physically meaningful solution for the tricritical point
exists only if $\kappa_1\in (\kappa_1^L,\kappa_1^U)$ where
$\kappa_1^L \approx 1.436$ and $\kappa_1^U\approx 2.03414$ are the lower and
upper limits, respectively.
While $v_1^t$ is a decreasing function of $\kappa_1$, $A^t$ is a slowly
increasing function of $\kappa_1$.
The value of $v_1^t$ diverges when $\kappa_1$ approaches its lower limit
$\kappa_1^L$ from above. 
Consequently, all transitions are of first order (discontinuous) for
$0<\kappa_1\le \kappa_1^L$.
The upper limit $\kappa_1^U$ is given by the intersection of the $v_1^t$-curve
with the dash-and-dot line $v_1={\rm e}^{\kappa_1}/\kappa_1$ which is the border
of the accessible subspace for Yukawa parameters (\ref{subspace}). 
All transitions are of second order (continuous) for $\kappa_1>\kappa_1^U$;
for every $\kappa_1$ from this interval, the critical line goes down up to
the border line $v_1={\rm e}^{\kappa_1}/\kappa_1$, leaving no space for
a tricritical point and the corresponding first-order transitions.

Recall that for $\kappa_1=2$ the ``minimal'' critical value of $A^*$ at which
$v_1(A^*)$ goes to infinity was found numerically to be nonzero, in particular
$A^*_{\min}\approx 2.18626$.
In what follows, we aim at deriving an exact formula for
$A^*_{\min}(\kappa_1)$ for any value of $\kappa_1>0$.
According to (\ref{v2k2}), for a fixed value of $\kappa_1$ and in the limit
of large $v_1$, the Yukawa parameter $\kappa_2$ behaves as
\begin{equation} \label{asymp1}
\kappa_2 = \kappa_1 - \frac{1}{v_1} (1+\kappa_1) {\rm e}^{\kappa_1}
+ O\left( \frac{1}{v_1^2} \right)   
\end{equation}
and the Yukawa parameter $v_2$ is given by 
\begin{equation} \label{asymp2}
v_2 = v_1 - \kappa_1 {\rm e}^{\kappa_1} + O\left( \frac{1}{v_1} \right) .    
\end{equation}
Since
\begin{equation}
{\rm d}\mu_f\left(\frac{t}{A} \right) = \rho_f\left(\frac{t}{A} \right) \frac{{\rm d}t}{A}
= \left( v_1 e^{-\frac{\kappa_1^2 A}{4t}}
-v_2 e^{-\frac{\kappa_2^2 A}{4t}}\right) \frac{{\rm d}t}{\sqrt{\pi t A}} ,
\end{equation}
using the asymptotic expansions (\ref{asymp1}) and (\ref{asymp2}),
the expression into brackets can be expressed as  
\begin{equation}
v_1 e^{-\frac{\kappa_1^2 A}{4t}} -v_2 e^{-\frac{\kappa_2^2 A}{4t}}
= \kappa_1 {\rm e}^{\kappa_1} e^{-\frac{\kappa_1^2 A}{4t}}
\left[ 1 - \frac{(1+\kappa_1)}{2 t} A \right] +
O\left( \frac{1}{v_1} \right) .
\end{equation}  
With regard to Eq. (\ref{E2A}), the critical condition $E_2(A^*)=0$
takes in the limit $v_1\to\infty$ the form
\begin{equation} 
0 = \int_0^\infty \sqrt{t} e^{-\frac{\kappa_1^2 A^*_{\min}}{4t}}
\left[ 1 - \frac{(1+\kappa_1)}{2 t} A^*_{\min} \right]
\left[ \theta_3 \theta_3^{(1)} - t \left( \theta_3^{(1)}
+ t\theta_3 \theta_3^{(2)} \right)^2 \right] {\rm d}t .
\end{equation}
This equation determines for any $\kappa_1>0$ the exact value of
the critical inverse density $A^*_{\min}(\kappa_1)$ at which $v_1\to\infty$.
In particular, at $\kappa_1=2$ the obtained $A^*_{\min}\approx 2.186262818188$
agrees with the previous numerical estimate $A^*_{\min}\approx 2.18626$.
The monotonous decay of $A^*_{\min}$ with increasing the Yukawa parameter
$\kappa_1>0$ is presented in Fig. \ref{fig5} by the solid line
connecting data (open circles).
The function $A^*_{\min}(\kappa_1)$ tends to unity in the limit
$\kappa_1\to\infty$.
In the opposite limit $\kappa_1\to 0^+$,
\begin{equation} 
\lim_{\kappa_1\to 0^+} A^*_{\min}(\kappa_1) =
\frac{ 2 \int_0^\infty  \sqrt{t}
\left[ \theta_3 \theta_3^{(1)} + t\theta_3 \theta_3^{(2)}
- t \left( \theta_3^{(1)} \right)^2 \right]{\rm d}t}{\int_0^\infty 
\frac{1}{\sqrt{t}}
\left[ \theta_3 \theta_3^{(1)} + t\theta_3 \theta_3^{(2)}
- t \left( \theta_3^{(1)} \right)^2 \right]{\rm d}t}\approx 5.71344 .
\end{equation}  

\begin{figure}[tbp]
\centering
\includegraphics[clip,width=0.7\textwidth]{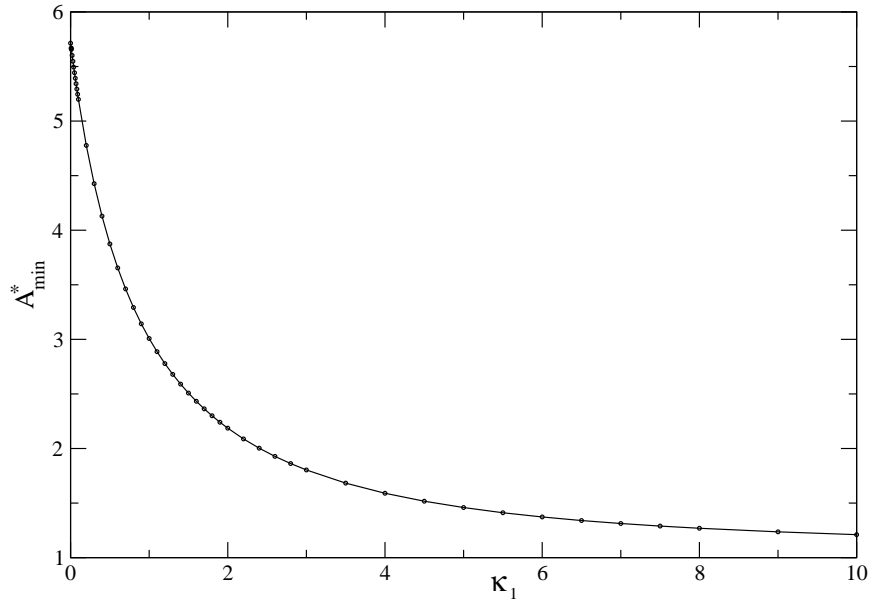}
\caption{The dependence of the minimal value of the critical inverse
density $A^*_{\min}$, at which a second-order transition takes place
in the limit of $v_1\to\infty$, on the Yukawa parameter $\kappa_1>0$.
The limiting values of $A^*_{\min}(\kappa_1)$ are:
$A^*_{\min}(\kappa_1\to\infty) = 1$ and
$A^*_{\min}(\kappa_1\to 0^+)=5.71344\ldots$.}
\label{fig5}
\end{figure}

\renewcommand{\theequation}{5.\arabic{equation}} 
\setcounter{equation}{0}

\section{Yukawa-Coulomb potential} \label{Sec5}
\textbf{Potential, parameters constraints and energy.}
In the special Yukawa-Coulomb case $\kappa_2=0$,
the potential
\begin{equation}
f(r) = v_1 \frac{\exp{(-\kappa_1 r)}}{r} - v_2 \frac{1}{r}
\end{equation}
has a minimum at $r_{\min}=1$ and $f(1)=-1$ under conditions
\begin{equation} \label{v1v2}
v_1 = \frac{e^{\kappa_1}}{\kappa_1} , \qquad
v_2 = \frac{1+\kappa_1}{\kappa_1} .
\end{equation}
One is left with only one independent parameter $\kappa_1>0$
and there are no constraints on this parameter.
The associated measure is therefore
\begin{equation}
{\rm d}\mu_f(t) = 
\left[v_1\exp{\left(-\frac{\kappa_1^2}{4t}\right)} - v_2\right]\frac{{\rm d}t} {\sqrt{\pi t}} .
\end{equation}

It is clear that $f\not\in \mathcal{F}$ since it is not integrable at infinity. Therefore, a regularization of the divergent lattice sum by a neutralizing background
is necessary for the Coulomb term (see \cite{Travenec22} and Remark \ref{rmk-renorm}).
The energy then reads as
\begin{eqnarray}
E(A,\Delta;\kappa_1) & = & \frac{1}{2\sqrt{A\pi}}
\int_0^\infty\Bigg\{ v_1
{\rm e}^{-\frac{\kappa_1^2 A}{4t}} \left[ \theta_3\left({\rm e}^{-t\Delta}\right)
\theta_3\left({\rm e}^{-t/\Delta}\right) - 1 \right] \nonumber \\
& & - v_2\left[ \theta_3\left({\rm e}^{-t\Delta}\right)
\theta_3\left({\rm e}^{-t/\Delta}\right) - 1 -\frac{\pi}{t}\right] \Bigg\}\frac{{\rm d} t}{\sqrt{t}}.
\end{eqnarray}
The neutralizing background manifests itself as the addition of the singular
term $-\pi/t$ in the last square bracket, which ensures the convergence of
the integral. It is again straightforward to show that transitions point exists by direct application of Proposition \ref{prop:transition} in the same way that we did in Lemma \ref{lem:transyukawa} for the $\kappa_2>0$ case.

\begin{figure}[tbp]
\centering
\includegraphics[clip,width=0.7\textwidth]{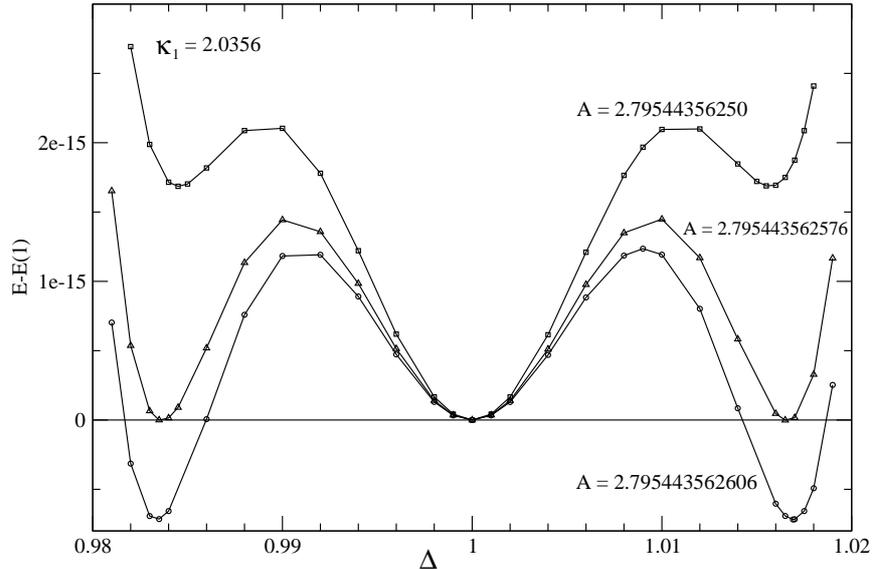}
\caption{The plot of the energy difference $E(\Delta)-E(1)$ versus $\Delta$
for the Yukawa parameter $\kappa_1=2.0365$, just below its tricritical
value $\kappa_1^t=2.036517758847\ldots$ where a first-order transition
between the square and rectangular lattices exists.
Three values of the inverse particle density $A$ are considered:
$A=2.79544356250$ when the square lattice with $\Delta=1$ prevails,
the first-order transition value $A_{\rm trans}=2.795443562576$ where
the energies of the square and rectangular lattices, separated by
energy barriers, coincide and $A=2.795443562606$ where the rectangular
case is dominant via a jump
in $\Delta$.}
\label{fig6}
\end{figure}

\medskip

\textbf{Numerical investigation.} Solving the couple of equations (\ref{tricritical}) with $\kappa_2=0$,
$v_1$ and $v_2$ given by (\ref{v1v2}),
one gets the tricritical point for the Yukawa-Coulomb interaction:
$A^t\approx 2.795433950879$ and $\kappa_1^t\approx 2.036517758847$.
Let us make a small step into the region $\kappa_1<\kappa_1^t$, where
a step-wise first-order transition exists, say $\kappa_1=2.0365$. We observe the following:
\begin{enumerate}
\item As is seen in Fig. \ref{fig6}, there is just one energy minimum at
$\Delta=1$ for $A$ slightly below the first-order transition value
$A_{\rm trans}=2.795443562576$ and the square lattice prevails.
\item For $A$ slightly larger than $A_{\rm trans}$, one gets two equivalent minima with
$\Delta>1$, related via the symmetry $\Delta\to 1/\Delta$, meaning that
the rectangular case is dominant via a jump in $\Delta$.
\item  Exactly at $A_{\rm trans}$ we have three equivalent minima,
i.e., the energies of the square and rectangular lattices, separated by
energy barriers, coincide. 
\end{enumerate}

\begin{figure}[tbp]
\centering
\includegraphics[clip,width=0.7\textwidth]{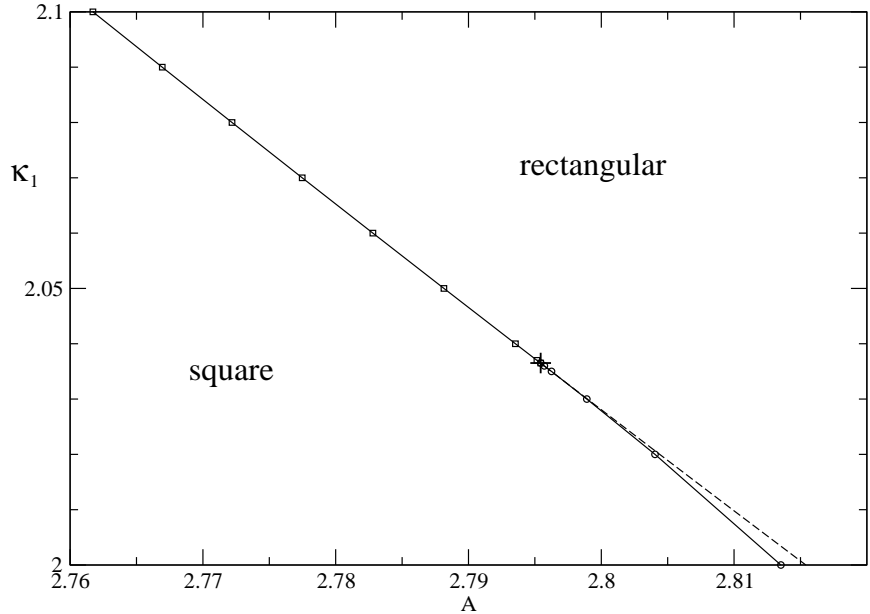}
\caption{Phase diagram of the Yukawa-Coulomb model close to
the tricritical point denoted by the cross.
The critical line of second-order transitions is denoted by open squares,
first-order transitions are indicated by open circles.
The dashed line corresponds to an artificial prolongation of the
critical line of second-order transitions to the region where
first-order transitions are dominant.}
\label{fig7}
\end{figure}

The phase diagram of the Yukawa-Coulomb model is pictured in Fig. \ref{fig7}.
The second-order transitions are marked by open squares, the first-order
by open circles and the tricritical point by the cross.
The dashed line is an artificial prolongation of the critical line
beyond the tricritical point.

\begin{figure}[tbp]
\centering
\includegraphics[clip,width=0.7\textwidth]{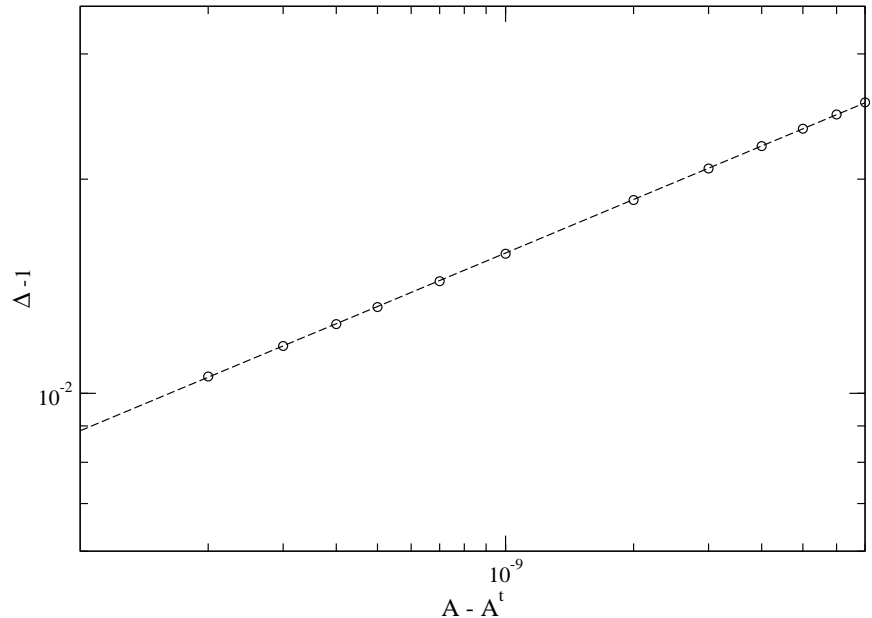}
\caption{The plot of the deviation $\Delta-1$ versus $A-A^t$
for the Yukawa-Coulomb model close to the tricritical point.}
\label{fig8}
\end{figure}

In Fig. \ref{fig8}, $\Delta-1$ is plotted as the function of $(A-A^t)$
in order to check the tricritical exponent in the Yukawa-Coulomb case.
The numerical data are represented by open circles.
Fitting the log-log plot via $\Delta-1 \propto (A-A^t)^{\beta^t}$, represented
by the dashed line, yields the tricritical exponent $\beta^t=0.2498$ which
is very close to the anticipated mean-field one $\beta^t=1/4$.

\section*{Acknowledgment}
The support received from the project EXSES APVV-20-0150
and VEGA Grant No. 2/0092/21 and is acknowledged.

\end{document}